\newtheorem{theorem}{Theorem}
\newtheorem{lemma}{Lemma}
\newtheorem{fact}{Fact}
\theoremstyle{definition}
\newtheorem{definition}{Definition}
\begin{document}
\title{Quantum and classical query complexities for determining connectedness of matroids}
\author{Xiaowei Huang, Shiguang Feng, Lvzhou Li\footnote{lilvzh@mail.sysu.edu.cn}}
\affil{Institute of Quantum Computing and and Software, School of Computer Science and Engineering, Sun Yat-sen University, Guangzhou 510006, China}

\maketitle


\begin{abstract}
Connectivity is a fundamental structural property of matroids, and has 
been studied algorithmically over 50 years. In 1974,  Cunningham proposed a deterministic algorithm consuming
$O(n^{2})$ queries to the independence oracle to determine whether a matroid is connected. Since then,  no algorithm, not even a random one, has worked better.
To the best of our knowledge, the  classical query complexity lower bound  and the quantum complexity 
for this problem have not been considered. Thus, in this paper we are devoted to addressing these issues, and our contributions are threefold as follows:
(i) First, we  prove that the randomized query complexity of determining whether a matroid is connected  is $\Omega(n^2)$ and thus the  algorithm proposed by Cunningham is  optimal in classical computing. (ii) Second, we present a quantum algorithm with $O(n^{3/2})$ queries, which exhibits provable quantum speedups over classical ones.
(iii) Third, we prove that any quantum algorithm requires  $\Omega(n)$ queries, which indicates that quantum algorithms can achieve at most a quadratic speedup over classical ones. Therefore, we have a relatively comprehensive understanding of the potential of quantum computing in determining  the connectedness of matroids.\\

\noindent\textbf{Keywords:} matroid, connectivity, query complexity, quantum computing
\end{abstract}



\section{Introduction}

The concept of matroid was originally introduced by Whitney
\cite{JHUP/whitney35}
in 1935 as a generalization of the concepts of linear spaces and graphs. 
Now matroids have become ubiquitous in modern mathematics and computer 
science, and have various applications in geometry, topology, network theory, 
coding theory and quantum computing, especially in combinatorial  optimization \cite{DBLP:journals/tcad/AmyMM14,DBLP:journals/comsur/BassoliMRST13,
DBLP:journals/mp/Edmonds71,DBLP:conf/ismp/Iri82,HRW/lawler76,DBLP:mann2021simulating,
shepherd2009temporally}.
In combinatorial optimization, many important problems can be described and solved based on  
 matroid structures, which include greedy 
algorithms \cite{edmonds1971matroids}, matroid intersection problem \cite{cunningham1986improved,edmonds1979matroid,lawler1975matroid}, 
matroid partition problem \cite{cunningham1986improved,edmonds1968matroid}, 
etc. In addition, many matroid problems 
are interesting and valuable in their own right from an algorithmic viewpoint.

\subsection{Connectivity of Matroid}

Connectivity is a fundamental structural property of matroids. The notion of 
connectivity in matroid theory  derived from the connectivity of 
graphs \cite{tutte1966connectivity4graphs}. 
Since then, much work has been devoted to matroid connectivity, and 
it has become one of the most interesting 
directions in matroid theory
\cite{bixby1979matroids,cunningham1981on,oxley1981on,oxley1982on,seymour1988on,tutte1966connectivity}. 

Let $M$ be a matroid on the ground set $E$ with rank function $r$. A partition
$\{E_1,E_2\}$ of $E$ is an  \emph{$m$-separation} of $M$ ($m\geq 1$), if
$ \min\{|E_1|,|E_2|\}\geq m$ and $r(E_1)+r(E_2)-r(E)\leq m-1.$
The \emph{connectivity} of $M$,  denoted by 
$\lambda(M)$,  is the 
minimum $k$ such that $M$ has a $k$-separation. If $M$ has no such $k$, then $\lambda(M)=\infty$.
In general, a matroid $M$ is said to be \emph{$k$-connected} for $k\geq 2$, if 
$\lambda(M)\geq k$. A 2-connected matroid is usually called a
\emph{connected}  or \emph{non-separable} matroid (see Definition \ref{de:non-separable-matroid}).



\subsection{Motivation and Problem Statement}

Quantum computers have been shown to  outperform classical computers in solving some problems with significant application value \cite{DBLP:conf/stoc/Grover96,DBLP:conf/focs/Shor94}.  Discovering  more problems
for  which quantum computers  exhibit speedups over classical ones is one of the primary goals in the field of  quantum
computing. 
Considering matroids are abstractions of some mathematical structures common to graph 
and linear space,  it is of great interest  and meaningful to explore what matroid 
problems can be faster solved on quantum computers than classical ones. 
To the best of our knowledge, there are only few discussions on potential quantum 
speedups in solving matroid problems
\cite{huang2021quantum,huang2022quantum,DBLP:conf/ciac/KulkarniS13}. 
Therefore, in this paper we are trying to advance research in this direction.
We investigate both quantum and classical query complexities for determining whether
a matroid is connected, which is formally described as follows:\\

\noindent\emph{\textbf{Connected Matroid Problem (CMP).} Given a matroid $M$ accessed 
by the independence oracle, determine whether $M$ is connected.}\\

Determining whether a graph is biconnected\footnote{An undirected graph is called biconnected if there are two vertex-disjoint paths between any two  vertices, that is the graph is still connected after deleting any one vertex.} is a fundamental problem in graph theory. Biconnected graphs are instances of connected matroids. A graph is biconnected if and only if the corresponding cycle matroid is connected. It is of importance to study the Connected Matroid Problem.

\subsection{Our Contributions}
 Given a matroid $M=(E,\mathcal{I})$ with $|E|=n$ accessed by the
independence oracle, we prove the following main results.
\begin{enumerate}
\item Any randomized algorithm that determines whether $M$ is connected with
bounded error 1/3 needs at least $\Omega(n^2)$ queries to the independence oracle
(see Theorem \ref{th:radomized-query-complexity}).
\item There is a quantum algorithm that determines whether $M$ is connected 
with bounded error 1/3 using $O(n^{3/2})$ queries to the independence oracle
(see Theorem \ref{th:quantum-algorithm-for-non-separable-testing}).
\item Any quantum algorithm that determines whether $M$ is connected with
bounded error 1/3 needs at least $\Omega(n)$ queries to the independence oracle
(see Theorem \ref{th:quantum-query-complexity}).
\end{enumerate}

Based on these results, we have a relatively comprehensive understanding of the potential of quantum computing in determining the connectedness of matroids.  For the problem of determining whether a matroid is connected, one can see  that (i) Cunningham's deterministic algorithm is 
 optimal in classical computing, (ii) quantum computers exhibit provable  quantum speedups over classical ones, and  (iii) quantum computers
can achieve at most a quadratic speedup over classical ones.

To prove the lower bounds, 
we construct a disconnected matroid from a connected matroid with the least number of bases (also known as the minimal matroid 
\cite{dinolt1970an}) by removing a base 
(see Lemma~\ref{le:nearest-to-minimal-matroid}). Intuitively, the resulting  
disconnected matroid and the minimal matroid are relatively hard to distinguish. 
This is the keypoint for the proof of both the quantum and classical lower bounds. 

\subsection{Related Work} 
In 1974, Cunningham~\cite{cunnigham1974a} first considered  \textbf{CMP} and proposed an algorithm 
using $O(n^2)$ queries to the independence oracle, where $n$ is the cardinality of the 
ground set. In 1977, Krogdahl~\cite{krogdahl1977dependence} proposed an algorithm similar to that of Cunningham with the same query complexity when studying the dependence graphs of bases in matroids.
Bixby and Cunningham~\cite{bixby1979matroids} gave an algorithm for determining 
3-connected. For  $k>3$, Cunningham \cite{cunnigham1974a} presented an algorithm 
to determine whether a matroid is $k$-connected. 
Jensen and Korte 
\cite{DBLP:journals/siamcomp/JensenK82} showed that there is 
no polynomial query algorithm to compute the connectivity of a matroid.

Since Cunningham proposed the $O(n^2)$ algorithm for \textbf{CMP} in 1974, no algorithm, not even a random one, has worked better.
As far as we know, the  classical query complexity lower bound and  the quantum complexity
for this problem have not been considered. It is not known whether 
Cunningham's algorithm is optimal and whether  there are more efficient quantum 
algorithms to determine the connectedness of matroids.


\subsection{Organization } 
The remainder of this paper is organized as follows. 
In Section 2, some basic concepts in matroid theory and query model are introduced.
In Section 3, we prove the lower bound on randomized query complexity.
In Section 4, we give a quantum algorithm determining the connectedness of matroids.
In Section 5, we prove the lower bound on quantum query complexity.
In Section 6, we conclude this paper.

\section{Preliminaries}\label{pre}
\noindent\textbf{Notations}: 
Let $E$ be a finite set, $A\subseteq E$, and $x\in E$. We use $|E|$, $A+x$, and $A-x$ to denote the cardinality of $E$, the set $A\cup\{x\}$, and the set $A-\{x\}$, respectively. For a positive integer $n$, we use $[n]$ to denote the set $\{1,\dots,n\}$. For a string $x\in\{0,1\}^n, i\in[n]$, $x_i$ denotes
the $i$-th bit of $x$.

\subsection{Matroid Theory}\label{sec:matroid}
Here we give some basic definitions and concepts on matroids. 
Matroid theory is established as a generalization of linear algebra and graph
theory. Some concepts are similar to those of linear algebra or graphs.
We refer the reader to \cite{OUP/oxley11} and \cite{welsh1976matroid} for more details about matroid theory.

\begin{definition}[\textbf{Matroid}]
  \label{def:matroid}
  A \emph{matroid} is a combinational object defined by the tuple
  $M=(E,\mathcal{I})$ on the finite ground set $E$ and $\mathcal{I}\subseteq 2^E$ such
  that the following properties hold:
  \begin{enumerate}
  \item[\textbf{I0}.] $\emptyset \in \mathcal{I}$;
  \item[\textbf{I1}.] If $A'\subseteq A$ and $A\in\mathcal{I}$, then $A'\in\mathcal{I}$;
  \item[\textbf{I2}.] For any two sets $A,B\in\mathcal{I}$ with $|A|<|B|$, there 
  exists an element $x\in B-A$ such that $A+x\in\mathcal{I}$.
  \end{enumerate}
\end{definition}
The members of $\mathcal{I}$ are the \emph{independent} sets of $M$.
The subsets of $E$ not belonging to $\mathcal{I}$ are called \emph{dependent}.
A \emph{base} of $M$ is a maximal independent set, and the collection of bases is denoted by $\mathcal{B}(M)$.
A \emph{circuit} of $M$ is a minimal dependent set, and the collection of circuits is denoted by $\mathcal{C}(M)$.

\begin{definition}[\textbf{Rank}]
  \label{def:rank}
  The \emph{rank function} of a matroid $M=(E,\mathcal{I})$ is the function
  $r:2^E\rightarrow\mathbb{Z}$ defined by
  \begin{equation}
   r(A) = \max\Big\{|X|:X\subseteq A,X\in\mathcal{I}\Big\}\;\;\;\;\;\;(A\subseteq E).
   \end{equation}
\end{definition}
\noindent{}The rank of $M$, denoted by $r(M)$, is $r(E)$.

\begin{lemma}[Base Axioms]\label{th:base-axioms}
A non-empty collection $\mathcal{B}$ of subsets of $E$ is the set of 
bases of a matroid on $E$ if and only if it satisfies the following 
condition:
\begin{itemize}
\item[(B1)] If $B_1,B_2\in\mathcal{B}$ and $x\in B_{1}-B_{2}$, then there exists $y\in B_{2}-B_{1}$ such that $B_{1}+y-x\in\mathcal{B}$.\label{B1}
\end{itemize}
\end{lemma}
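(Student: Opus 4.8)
The plan is to prove the two implications separately, following the classical route for base-exchange characterizations. \emph{Necessity of (B1).} Suppose $\mathcal{B}=\mathcal{B}(M)$ for a matroid $M=(E,\mathcal{I})$. The key preliminary observation is that all bases have the same cardinality: if $B_1,B_2\in\mathcal{B}$ with $|B_1|<|B_2|$, then \textbf{I2} applied to $B_1,B_2\in\mathcal{I}$ produces an element extending $B_1$ inside $\mathcal{I}$, contradicting the maximality of $B_1$; in particular every base has size $r(M)$, and every independent set of size $r(M)$ is a base. Now take $B_1,B_2\in\mathcal{B}$ and $x\in B_1-B_2$. By \textbf{I1}, $B_1-x\in\mathcal{I}$ with $|B_1-x|=|B_2|-1<|B_2|$, so \textbf{I2} yields $y\in B_2-(B_1-x)$ with $(B_1-x)+y\in\mathcal{I}$. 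Since $y\in B_2$ while $x\notin B_2$, we have $y\neq x$, hence $y\in B_2-B_1$; and $(B_1-x)+y$ is an independent set of size $r(M)$, hence a base. This is precisely (B1).

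\emph{Sufficiency of (B1).} Assume $\mathcal{B}$ satisfies (B1), and set $\mathcal{I}=\{A\subseteq E:A\subseteq B\text{ for some }B\in\mathcal{B}\}$; I will show $(E,\mathcal{I})$ is a matroid whose base collection is $\mathcal{B}$. Axioms \textbf{I0} and \textbf{I1} are immediate, since $\mathcal{B}\neq\emptyset$ and ``being contained in some base'' is inherited by subsets. Next I would record that (B1) forces all members of $\mathcal{B}$ to have equal cardinality: otherwise pick $B_1,B_2\in\mathcal{B}$ with $|B_1|>|B_2|$ and $|B_1-B_2|$ minimum, choose $x\in B_1-B_2$, and apply (B1) to obtain $y\in B_2-B_1$ with $B_1-x+y\in\mathcal{B}$; this base has the same size as $B_1$ but a strictly smaller difference from $B_2$, a contradiction. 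Equicardinality then gives at once that the maximal members of $\mathcal{I}$ are exactly the elements of $\mathcal{B}$ (a base lies in $\mathcal{I}$, and no base properly contains another), so once \textbf{I2} is verified the base collection of $(E,\mathcal{I})$ is $\mathcal{B}$.

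\emph{Verifying \textbf{I2}, the main obstacle.} Let $A,B\in\mathcal{I}$ with $|A|<|B|$, and fix a base $B_B\supseteq B$. Among the bases containing $A$ --- a nonempty finite family since $A\in\mathcal{I}$ --- choose $B_A$ minimizing $|B_A\setminus(A\cup B_B)|$. I claim this minimum is $0$: if not, pick $x\in B_A\setminus(A\cup B_B)$; since $x\in B_A-B_B$, (B1) applied to $B_A,B_B,x$ gives $y\in B_B-B_A$ with $B_A-x+y\in\mathcal{B}$, and this base still contains $A$ (as $x\notin A$) while meeting $E\setminus(A\cup B_B)$ in one fewer element (as $y\in B_B$), contradicting minimality. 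Hence $B_A\setminus A\subseteq B_B$. A counting argument now closes the proof: if $(B_A\setminus A)\cap B$ were empty, then $B_A\setminus A\subseteq B_B\setminus B$, so $|B_A|-|A|\le|B_B|-|B|$; but $|B_A|=|B_B|$ by equicardinality, forcing $|B|\le|A|$, a contradiction. Thus there is $x\in(B_A\setminus A)\cap B\subseteq B-A$, and $A+x\subseteq B_A$ gives $A+x\in\mathcal{I}$, which is \textbf{I2}. The delicate point throughout is exactly this construction of $B_A$: one must transport a base containing $A$ into $A\cup B_B$ by repeated exchanges (packaged here as a single minimization) without ever dropping an element of $A$, after which the conclusion is only a short cardinality count relying on all bases having equal size.
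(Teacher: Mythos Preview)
Your proof is correct and follows the standard textbook route. Note, however, that the paper does not supply its own proof of this lemma at all: immediately after stating the Base Axioms and the Circuit Axioms it simply remarks that ``the proofs of the two lemmas above can be found in \cite{OUP/oxley11,welsh1976matroid}.'' So there is nothing in the paper to compare against beyond that citation; your argument is essentially the classical one found in those references (equicardinality of bases from (B1), the downward closure construction of $\mathcal{I}$, and the exchange-into-$A\cup B_B$ minimization to verify \textbf{I2}).
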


\begin{lemma}[Circuit Axioms]\label{th:circuit-axioms}
A non-empty collection $\mathcal{C}$ of subsets of $E$ is the set of 
circuits of a matroid on $E$ if and only if it satisfies the following 
conditions:
\begin{itemize}
\item[(C1)] If $X,Y\in\mathcal{C}$ and $X\neq Y$, then $X\nsubseteq Y$. \label{C1}
\item[(C2)] If $C_{1},C_{2}$ are distinct members of $\mathcal{C}$ and
$z\in C_{1}\cap C_{2}$, then there exists $C_{3}\in\mathcal{C}$ such that
$C_{3}\subseteq (C_{1}\cup C_{2})-z$.\label{C2}
\end{itemize}
\end{lemma}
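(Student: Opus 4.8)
The plan is to prove the two implications separately. For the ``only if'' direction, suppose $\mathcal{C}=\mathcal{C}(M)$ for a matroid $M$. Condition (C1) is immediate: a circuit is a minimal dependent set, so if $X\subseteq Y$ with $X,Y\in\mathcal{C}$ then $X$ is a dependent subset of $Y$ and hence $X=Y$. For (C2) I would argue by contradiction: if $(C_1\cup C_2)-z$ contained no member of $\mathcal{C}$ it would be independent, so $r((C_1\cup C_2)-z)=|C_1\cup C_2|-1$; and since $C_1\cup C_2\supseteq C_1$ is dependent, $r(C_1\cup C_2)\le|C_1\cup C_2|-1$, forcing $r(C_1\cup C_2)=|C_1\cup C_2|-1$. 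Combining this with $r(C_i)=|C_i|-1$, the identity $|C_1|+|C_2|=|C_1\cup C_2|+|C_1\cap C_2|$, and the submodular inequality $r(C_1)+r(C_2)\ge r(C_1\cup C_2)+r(C_1\cap C_2)$ gives $r(C_1\cap C_2)\le|C_1\cap C_2|-1$; but $C_1\cap C_2$ is a proper subset of the minimal dependent set $C_1$ (as $C_1\ne C_2$), hence independent with $r(C_1\cap C_2)=|C_1\cap C_2|$---a contradiction. This uses submodularity of the rank function, which is a standard consequence of Definition~\ref{def:matroid}; alternatively one can run the same count directly with maximal independent subsets of $C_1\cup C_2$.

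For the ``if'' direction, assume $\mathcal{C}$ satisfies (C1) and (C2); we may also assume $\emptyset\notin\mathcal{C}$, since otherwise (C1) forces $\mathcal{C}=\{\emptyset\}$, which is not the circuit family of any matroid. Define $\mathcal{I}:=\{A\subseteq E: A\text{ contains no member of }\mathcal{C}\}$ and put $M=(E,\mathcal{I})$. Then \textbf{I0} holds since $\emptyset$ contains no member of $\mathcal{C}$, and \textbf{I1} is immediate because a subset of a set that contains no circuit also contains no circuit. The crux is \textbf{I2}, which I would establish by a minimal-counterexample argument: take $A,B\in\mathcal{I}$ with $|A|<|B|$ and $A+x\notin\mathcal{I}$ for every $x\in B-A$, chosen so that $|A\setminus B|$ is minimum. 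If $A\subseteq B$, picking any $x\in B-A$ gives $A+x\subseteq B\in\mathcal{I}$, contradicting the failure of I2. Otherwise, fixing $e\in A\setminus B$ and applying minimality twice, one first obtains $A'=A+f-e\in\mathcal{I}$ for a suitable $f\in B-A$ (which strictly decreases $|A\setminus B|$), and then finds $g\in B-A'$ with $A'+g\in\mathcal{I}$. Since $A+f$ and $A+g$ are both dependent, each contains a circuit; a short argument shows each of these two circuits must contain $e$ and that they are distinct, so eliminating $e$ between them via (C2) produces a circuit contained in $A+f+g-e=A'+g\in\mathcal{I}$---a contradiction. Hence $M$ is a matroid.

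It remains to check $\mathcal{C}(M)=\mathcal{C}$. If $C\in\mathcal{C}$ then $C$ is dependent in $M$, while for each $x\in C$ the set $C-x$ contains no member of $\mathcal{C}$ (such a member would by (C1) equal $C$, which is impossible), so $C-x$ is independent; thus $C$ is a minimal dependent set and $C\in\mathcal{C}(M)$. Conversely, any $D\in\mathcal{C}(M)$ is dependent, hence contains some $C\in\mathcal{C}\subseteq\mathcal{C}(M)$, and minimality of $D$ forces $C=D\in\mathcal{C}$. The main obstacle is the verification of \textbf{I2}: the exchange bookkeeping in the minimal counterexample is the one genuinely delicate step, and it is exactly where the weak circuit-elimination axiom (C2) enters, while everything else is routine.
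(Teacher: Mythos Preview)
Your proof is correct and essentially follows the standard textbook argument; the paper itself does not prove this lemma but simply refers the reader to Oxley and Welsh, so there is no ``paper's own proof'' to compare against beyond those references. Your submodularity computation for (C2) and the minimal-counterexample verification of \textbf{I2} via weak circuit elimination are exactly the approaches found in those texts, and the bookkeeping you sketch (showing $e\in C_f\cap C_g$, $C_f\neq C_g$, and $C_3\subseteq A'+g$) goes through as written.

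One small remark: you correctly flag that the lemma as stated in the paper omits the hypothesis $\emptyset\notin\mathcal{C}$, which the standard references include explicitly; without it the ``if'' direction is false for $\mathcal{C}=\{\emptyset\}$. Your handling of this is appropriate, though strictly speaking it is a defect of the paper's statement rather than something your proof needs to patch.
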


The proofs of the two lemmas above can be found in \cite{OUP/oxley11,welsh1976matroid}.

\begin{fact}
Let $M=(E,\mathcal{I})$ be a matroid. If $B$ is a base of $M$ and $x\in E- B$, then
there exists a unique circuit $C(x,B)$ such that
\begin{equation*}
    {C(x,B)}\subseteq{B}+x, \text{ and } x\in{C(x,B)}.
\end{equation*}
\end{fact}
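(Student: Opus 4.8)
The plan is to prove existence and uniqueness separately; both parts follow quickly from the axioms already available, so I do not anticipate a genuine obstacle here.

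\emph{Existence.} First I would show that $B+x$ is dependent. Since $B$ is a base, it is a maximal independent set, and by \textbf{I2} all maximal independent sets have the same cardinality, so $|B|=r(M)=r(E)$. As $B+x\subseteq E$, the rank function satisfies $r(B+x)\le r(E)=|B|<|B+x|$, hence $B+x\notin\mathcal{I}$. Next, every nonempty dependent set contains a circuit: among the finitely many dependent subsets of $B+x$ choose an inclusion-minimal one, call it $C$; by definition $C\in\mathcal{C}(M)$ and $C\subseteq B+x$. Finally $x\in C$, since otherwise $C\subseteq B$, which contradicts $B\in\mathcal{I}$ together with \textbf{I1} (no subset of an independent set is dependent, hence none is a circuit). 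Setting $C(x,B):=C$ gives the desired circuit.

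\emph{Uniqueness.} Suppose, for contradiction, that $C_1\ne C_2$ are circuits with $C_1,C_2\subseteq B+x$. By the existence argument applied to each, $x\in C_1$ and $x\in C_2$, so $x\in C_1\cap C_2$. Applying the circuit elimination axiom (C2) of Lemma~\ref{th:circuit-axioms} with $z=x$, there is a circuit $C_3\in\mathcal{C}(M)$ with $C_3\subseteq(C_1\cup C_2)-x\subseteq(B+x)-x=B$. This contradicts the fact (established above) that $B$, being independent, contains no circuit. Therefore the circuit contained in $B+x$ and containing $x$ is unique, so $C(x,B)$ is well defined; it is the fundamental circuit of $x$ with respect to $B$.

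The only steps requiring a little care are the observation that every nonempty dependent set contains a circuit (immediate from finiteness by passing to a minimal dependent subset) and the containment $(C_1\cup C_2)-x\subseteq B$, which is precisely what makes the elimination axiom applicable. Neither presents a real difficulty, so the proof is short.
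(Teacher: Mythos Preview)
Your proof is correct and follows the standard argument for fundamental circuits. The paper states this result as a \emph{Fact} without proof (it is a well-known consequence of the circuit axioms, found in the cited references \cite{OUP/oxley11,welsh1976matroid}), so there is no proof in the paper to compare against; your existence-via-minimality and uniqueness-via-elimination argument is exactly the canonical one.
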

\noindent{}The circuit $C(x,B)$ is called the fundamental circuit of $x$ in the base $B$.

\begin{definition}[\textbf{Connected Matroid}]
\label{de:non-separable-matroid}
A matroid $M = (E,\mathcal{I})$ with rank function $r$ is called \emph{connected} if every
nonempty proper subset $A$ of $E$ satisfies
\begin{equation*}
r(A)+r(E-A)>r(E). 
\end{equation*}
Otherwise $M$ is \emph{disconnected}.
\end{definition}


\begin{lemma}\label{le:connected-matroid}
A matroid $M = (E,\mathcal{I})$ is \emph{connected} if and only if for every pair of distinct elements $x$ and $y$ of $E$ 
there is a circuit of $M$ containing $x$ and $y$.
\end{lemma}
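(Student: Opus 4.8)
\emph{Plan.} This is the classical circuit characterization of connectedness, and the natural approach is to treat the two directions separately, using the relation $x\sim y$ on $E$ meaning ``$x=y$ or some circuit of $M$ contains both $x$ and $y$.'' I will freely use that a proper subset of a circuit is independent, that a maximal independent subset of any $A\subseteq E$ has cardinality $r(A)$ and spans $A$, and the trivial bound $r(A)+r(E-A)\ge r(E)$ (a maximal independent set $B$ of $M$ restricts to independent sets on $A$ and on $E-A$, so $r(E)=|B|\le r(A)+r(E-A)$).

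\emph{From the circuit condition to connectedness.} Assume every two distinct elements lie in a common circuit, and let $A$ be any nonempty proper subset of $E$. Pick $x\in A$, $y\in E-A$, and a circuit $C\ni x,y$; then $C$ meets both sides, so $C\cap A$ is a proper, hence independent, subset of $C$ and extends to a maximal independent subset $B_A$ of $A$; similarly $C\cap(E-A)$ extends to a maximal independent subset $B_A'$ of $E-A$. Now $B_A\cup B_A'\supseteq C$ is dependent, yet it spans $E$ (since $B_A$ spans $A$ and $B_A'$ spans $E-A$), so $r(A)+r(E-A)=|B_A|+|B_A'|=|B_A\cup B_A'|>r(B_A\cup B_A')=r(E)$. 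As $A$ was arbitrary, $M$ is connected.

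\emph{From connectedness to the circuit condition (contrapositive).} Suppose distinct $x,y$ lie in no common circuit, and let $A$ consist of $x$ together with every $z$ that shares a circuit with $x$; then $x\in A$, $y\notin A$, so $A$ is nonempty and proper. The key claim is that no circuit meets both $A$ and $E-A$: if a circuit $C$ had $u\in C\cap A$ and $v\in C\cap(E-A)$, then $u=x$ forces, via $C$, that $x$ and $v$ share a circuit, so $v\in A$, a contradiction; and $u\ne x$ means $x$ shares a circuit with $u$, so $x\sim u\sim v$, whence transitivity of $\sim$ gives that $x$ and $v$ share a circuit, again a contradiction. Given the claim, a maximal independent subset $B_A$ of $A$ and a maximal independent subset $B_A'$ of $E-A$ have independent union (any circuit contained in it would lie wholly in $A$ or wholly in $E-A$, hence inside an independent set), so $r(A)+r(E-A)=|B_A|+|B_A'|\le r(E)$; with the trivial reverse bound this is an equality, so $M$ is disconnected.

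\emph{The main obstacle.} Everything above is routine manipulation of ranks, bases, and fundamental circuits; the one genuinely delicate ingredient is transitivity of $\sim$, i.e.\ that sharing a circuit is a transitive relation on $E$. This is the standard consequence of the circuit elimination axiom (C2): for distinct $x,y,z$ with circuits $C_1\ni x,y$ and $C_2\ni y,z$ but no circuit through $x$ and $z$, one chooses such a pair with $|C_1\cup C_2|$ minimum, observes that then $C_1\ne C_2$, $x\notin C_2$ and $z\notin C_1$, and eliminates a common element of $C_1,C_2$ (in the strengthened form of (C2) that retains $x$) to obtain a circuit $C_3\subsetneq C_1\cup C_2$ with $x\in C_3$; a case analysis on whether $z\in C_3$ then yields either a circuit through $x$ and $z$ outright or a shorter witnessing pair, contradicting minimality. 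I expect this circuit-elimination bookkeeping to be the only part requiring real care (cf.\ \cite{OUP/oxley11,welsh1976matroid}), the remainder being direct.
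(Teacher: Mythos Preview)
The paper does not actually prove this lemma: immediately after stating it, the authors write ``The proof of Lemma~\ref{le:connected-matroid} can be found in \cite{OUP/oxley11,welsh1976matroid}'' and move on. So there is no proof in the paper to compare against; your proposal supplies precisely the standard textbook argument that those references contain.

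Your argument is correct. The forward direction is clean: a circuit straddling $A$ and $E-A$ forces the union of maximal independent subsets of the two sides to be spanning yet dependent, giving the strict inequality. The reverse direction via the equivalence class of $x$ under $\sim$ is the canonical route, and you rightly isolate transitivity of $\sim$ as the only nontrivial step. Your sketch of that step (strong circuit elimination plus minimality of $|C_1\cup C_2|$) is the usual one; just be aware that to invoke the ``strengthened form of (C2) that retains $x$'' you are using strong circuit elimination, which is itself a theorem rather than the axiom (C2) stated in the paper, so in a fully self-contained write-up you would either prove it or cite it explicitly. In the case $z\notin C_3$ of your case analysis, the point to record is that $C_3$ cannot lie entirely inside $C_1$ (it omits the eliminated element), so $C_3$ meets $C_2\setminus C_1$ at some $w\neq x,z$, and then the pair $(C_3,C_2)$ has strictly smaller union, contradicting minimality. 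With that one line made explicit, the proof is complete and matches what Oxley and Welsh do.
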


The proof of Lemma \ref{le:connected-matroid} can be found in \cite{OUP/oxley11,welsh1976matroid}.

\begin{definition}[\textbf{Independence Oracle}]
  \label{def:matroid-oracle}
  Let $M=(E,\mathcal{I})$ be a matroid. An \emph{independence oracle} of $M$ is a function $O_{i}:2^{E}\rightarrow\{0,1\}$, for $S\in E$, such that 
\begin{equation*}
 O_{i}(S) = \left\{
 \begin{array}{ll}
  1  & \text{if}\; S\in \mathcal{I}, \\
  0  & \text{otherwise}.
 \end{array}
 \right.
\end{equation*}

  In the quantum setting, we also use $O_i$ as the quantum independence 
  oracle that maps $|S\rangle|b\rangle$ to
  $|S\rangle|b\oplus O_{i}(S)\rangle$.
\end{definition}

\subsection{Query Models and Quantum Computing}
In this subsection, we briefly introduce the query complexity for deterministic, randomized, and quantum computing models. For more details about the query complexity and quantum computing, we refer the reader to \cite{DBLP:journals/tcs/BuhrmanW02,CUP/nielsen10}.

\subsubsection*{Decision Tree}
Let $f:\{0, 1\}^{n}\rightarrow\{0,1\}$ be a Boolean function. A \emph{(deterministic) decision tree} 
(which is also called a deterministic query algorithm) for $f$ is a 
binary tree $T$ in which each internal node is labelled with a variable $x_i$, and has two outgoing edges labelled with 0 and 1, respectively. Each leaf node is labelled with 0 or 1. Given an input $x$, the tree $T$ is evaluated as follows. Start at the root node, say with label $x_j$, get the value of $x_j$ by querying $x$, if $x_j=1$ then choose the subtree reached by taking the 1-edge, otherwise, choose the subtree reached by taking the 0-edge. Repeat the procedure recursively until a leaf node is reached. The label of the leaf node is $f(x)$.


With the decision tree, we can define the complexity measures for $f$.
\begin{definition}
The cost of a deterministic decision tree $T$ on an input $x$, denoted by $cost(T, x)$, 
is the number of queries made on $x$ when evaluating $T$.
\end{definition}

\begin{definition}
The deterministic query complexity of a function $f$, denoted by $D(f)$, is
\begin{equation}
D(f) = \min_{T\in\mathcal{T}}\max_{x\in\{0,1\}^n} cost(T,x),
\end{equation}
where $\mathcal{T}$ is the set of decision trees that compute $f$.\\
\end{definition}

\subsubsection*{Randomized Decision Tree}

There are two equivalent ways to define a randomized decision tree.
The first way is to define a randomized decision tree in which the branch taken
at each node is determined by the query value or a random coin flip
(possibly biased). The second is to define a randomized decision tree
as a probability distribution $P$ over deterministic decision trees. 
The randomized decision tree is evaluated by choosing a deterministic decision tree according to
$P$.

We say that a randomized decision tree computes $f$ with bounded-error if its 
output equals $f(x)$ with probability at least 2/3, for all $x\in\{0,1\}^n$.

\begin{definition}
The cost of a randomized decision tree over a probability distribution $P$ on an
input $x$, denoted by $cost(P, x)$, is the expected number of queries made on $x$ over $P$,
\begin{equation}
 cost(P,x)=\sum_{T\in\mathcal{T}}P(T)cost(T,x). 
\end{equation}
\end{definition}

\begin{definition}
The randomized query complexity of a function $f$, denoted by $R(f)$, is
\begin{equation}
R(f) = \min_{P\in\mathcal{P}}\max_{x\in\{0,1\}^n} cost(P,x),
\end{equation}
where $\mathcal{P}$ is the set of probability distributions over deterministic decision trees that compute $f$ with bounded-error.
\end{definition}

Let $A$ be a deterministic algorithm that outputs 1 or 0 for any input $x$ of the domain of $f$, let $\epsilon(A,x)=0$ if $A$ outputs $f(x)$, and $\epsilon(A,x)=1$ otherwise. 
\begin{definition}
Let $\mu$ be a probability distribution over the domain of $f$. We say that a deterministic algorithm $A$ computes $f$ over $\mu$ if
\begin{equation*}
\sum_{x\sim\mu}\mu(x)\epsilon(A,x) < 1/3.
\end{equation*}
\end{definition}


\begin{definition}
Let $\mu$ be a probability distribution over the domain of $f$. The distributional query complexity of $f$ on $\mu$, denoted by $D_{\mu}(f)$, is 
\begin{equation*}  D_{\mu}(f)=\min_{A\in\mathcal{A}_{\mu}}\sum_{x\sim\mu}\mu(x)cost(A,x),
\end{equation*}
where $\mathcal{A}_{\mu}$ is the set of deterministic algorithms that compute $f$
over $\mu$.
\end{definition}

\begin{lemma}[Yao’s minimax principle~\cite{DBLP:conf/focs/Yao77}]
Let $\mathcal{D}\subseteq \{0,1\}^n$, for any Boolean function 
$f:\mathcal{D}\rightarrow\{0,1\}$,
\begin{itemize}
\item[(1)] $D_{\mu}(f)\leq R(f)$ for every distribution $\mu$ over $\mathcal{D}$, and
\item[(2)] $D_{\mu}(f)=R(f)$ for some distribution $\mu$ over $\mathcal{D}$.
\end{itemize}
\end{lemma}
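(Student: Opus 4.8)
The plan is to prove the two parts separately: part~(1) by a direct averaging (probabilistic-method) argument, and part~(2) by applying von~Neumann's minimax theorem to a suitable finite two-player zero-sum game, after which part~(1) supplies the matching inequality so that the maximum over $\mu$ is actually attained.

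For part~(1), I would fix a distribution $\mu$ over $\mathcal{D}$ and take an optimal randomized decision tree, i.e.\ a distribution $P$ over deterministic decision trees that computes $f$ with bounded error and satisfies $\max_{x}cost(P,x)=R(f)$. Drawing $T\sim P$ and then $x\sim\mu$ and using linearity of expectation, the expectation over $T$ of the $\mu$-average error $\sum_{x}\mu(x)\epsilon(T,x)$ is at most the worst-case error of $P$, hence at most $1/3$, while the expectation over $T$ of the $\mu$-average cost $\sum_{x}\mu(x)cost(T,x)$ is at most $\max_{x}cost(P,x)=R(f)$. A combined averaging/Markov argument (invoking, if needed, a negligible error-reduction of $P$ so that its error lies strictly below $1/3$) then produces a single deterministic tree $T^{\ast}$ with $\sum_{x}\mu(x)\epsilon(T^{\ast},x)<1/3$, so that $T^{\ast}\in\mathcal{A}_{\mu}$, and with $\sum_{x}\mu(x)cost(T^{\ast},x)\le R(f)$; this yields $D_{\mu}(f)\le R(f)$.

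For part~(2), I would set up the matrix game whose row player chooses a deterministic decision tree and whose column player chooses an input. First I would restrict the row player to the \emph{finite} set of deterministic trees of depth at most $n$ (any tree may as well stop once it has read all $n$ bits, so this restriction is harmless), and I would fold the error requirement into the payoffs, letting the payoff of $(T,x)$ be $cost(T,x)$ when $\epsilon(T,x)=0$ and a fixed penalty exceeding $n$ when $\epsilon(T,x)=1$, so that any mixed row strategy of bounded value is automatically a bounded-error randomized algorithm. Von~Neumann's minimax theorem then gives $\min_{P}\max_{\mu}\,\mathbb{E}_{T\sim P,\,x\sim\mu}[\text{payoff}]=\max_{\mu}\min_{P}\,\mathbb{E}_{T\sim P,\,x\sim\mu}[\text{payoff}]$. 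On the left the column player may concentrate $\mu$ on a single worst-case input, so the value is $\min_{P}\max_{x}cost(P,x)=R(f)$; on the right, for each fixed $\mu$ the inner minimum over mixed $P$ is attained at a pure strategy, a single deterministic tree, which the penalty forces to respect the $\mu$-error bound, so the value is $\max_{\mu}D_{\mu}(f)$. Hence $\max_{\mu}D_{\mu}(f)=R(f)$, and combined with part~(1) this maximum is attained by some $\mu$.

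The main obstacle is making the reduction in part~(2) faithful: one must pick the truncation depth and the error penalty so that the row player's optimal (mixed) strategies in the game correspond exactly to the bounded-error randomized algorithms, and so that passing from a mixed column strategy to a worst-case input, and from a mixed row strategy to a single deterministic tree, changes neither side's value. Reconciling the asymptotically irrelevant but formally annoying gap between the ``success probability $\ge 2/3$'' convention used for $P$ and the ``error $<1/3$'' convention used for $\mathcal{A}_{\mu}$ — equivalently, checking that a pure-strategy optimizer of the constrained inner problem really has cost $D_{\mu}(f)$ rather than something smaller achieved by a mixture that meets the error bound only on average — is the delicate point; the rest is bookkeeping with linearity of expectation.
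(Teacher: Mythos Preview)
The paper does not prove this lemma; it is stated with a citation to Yao (1977) and used as a black box, and only part~(1) is actually invoked later. There is thus no paper proof to compare against.

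Your plan for part~(1) is the standard averaging argument and is sound; the $\le 1/3$ versus $<1/3$ mismatch you flag between the paper's definitions of $R(f)$ and $D_{\mu}(f)$ is real but can be absorbed by a routine error amplification of $P$ before averaging, just as you indicate.

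Your plan for part~(2), however, has a genuine gap. In the penalty game you set up, the left-hand value $\min_{P}\max_{x}\,\mathbb{E}_{T\sim P}[\text{payoff}(T,x)]$ is \emph{not} $R(f)$: if $P$ is any bounded-error mixture with error probability up to $1/3$ on some input $x$, then the expected payoff against that $x$ is at least $M/3$, where $M>n$ is your penalty, so the game value is of order $M$ rather than $R(f)$, and the identity you extract from von~Neumann's theorem is not the one you want. A single scalar payoff cannot simultaneously encode the cost objective and the hard per-input error constraint defining $R(f)$; to obtain the exact equality one must run a constrained minimax (row strategies restricted to the convex set of bounded-error mixtures, invoking Sion's theorem or LP duality with the error constraints kept separate), or else settle for the usual constant-factor version of bounded-error Yao. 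You correctly locate this as ``the delicate point,'' but the specific penalty mechanism you propose would fail, not merely require bookkeeping.
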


In general, it is  difficult to find the randomized query 
complexity of a problem directly. Yao's minimax principle gives a 
possible way to find  the randomized query complexity, which is one 
of the most commonly used methods.\\

\subsubsection*{Quantum Decision Tree}

Let $x$ be a binary string of length $n$. In the quantum query models~\cite{DBLP:conf/focs/BealsBCMW98}, we use $O_{x}$ to denote the query oracle for $x$, and define $O_{x}$ as a unitary operation
\begin{equation}
O_{x}:|i,y\rangle\rightarrow|i,y\oplus{x_i}\rangle,
\end{equation}
where $y$ is one qubit, $i$ is an index of $\lceil \log n\rceil$ qubits for every bit of $x$,  the variable $x_i$ denotes the $i$-th bit of $x$, and $\oplus$ is the \emph{exclusive-or} operator.

A \emph{quantum decision tree} on an input $x$ runs as follows: start with an initial state $\ket*{\vv{0}}$, then apply a unitary operation $U_{0}$ to the state, then  apply a query $O_{x}$, then apply another 
unitary operation $U_{1}$, and so on. Thus, a $T$-query quantum decision tree corresponds to a sequence $U=U_TO_x\cdots O_xU_1O_xU_0$ of unitary operations $\{U_i:i=0,\dots,T\}$ and $O_x$. Here each $U_i$ $(0\leq i \leq T)$ is a unitary operation that is independent of $x$. 
The result is obtained by measuring the final state $U\ket*{\vv{0}}$.

We say that a $T$-query quantum decision tree computes $f$ with bounded-error if the computing result equals $f(x)$ with probability at least  2/3. 
The quantum query complexity of a function $f$, denoted by $Q(f)$, is the least $T$ such that there is a $T$-query quantum decision tree that computes $f$ (with bounded-error 1/3).
Note that the quantum query complexity only considers the number of queries to the oracle $O_x$, but not $U_i$.\\

\begin{fact}
$Q(f)\leq R(f)\leq D(f)$.
\end{fact}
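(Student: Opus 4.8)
The plan is to prove the two inequalities separately; the first is essentially a definitional observation and the second is a simulation argument.

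For $R(f)\le D(f)$: I would note that a deterministic decision tree is the degenerate case of a randomized one. Fix an optimal deterministic tree $T^{\ast}$ for $f$, so $T^{\ast}$ computes $f$ with zero error and $\max_{x}cost(T^{\ast},x)=D(f)$. The point-mass distribution $P$ supported on $T^{\ast}$ then lies in $\mathcal{P}$ (zero error is in particular bounded error), and $cost(P,x)=cost(T^{\ast},x)\le D(f)$ for every $x$; taking the minimum over $\mathcal{P}$ gives $R(f)\le D(f)$.

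For $Q(f)\le R(f)$: the idea is to simulate a randomized decision tree by a quantum one spending one quantum query per classical query. I would proceed in three steps. (i) A deterministic decision tree of depth $k$ can be executed coherently with exactly $k$ applications of the query oracle: maintain a register naming the current node, and at each level apply $O_{x}$ to copy the queried bit into a fresh qubit, followed by an $x$-independent unitary that advances the node register according to that bit, padding short root-to-leaf paths with dummy queries so all paths have length $k$; after $k$ rounds the node register names a leaf, whose label is written to the output. (ii) Internal randomness is absorbed by letting the first unitary prepare $\sum_{s}\sqrt{P(s)}\,\ket{s}$ on an ancilla and running step (i) controlled on $s$; since distinct coin strings $s$ remain in orthogonal branches, measuring the output register yields $b$ with probability exactly $\sum_{s:\,T_{s}(x)=b}P(s)$, so the quantum tree reproduces the randomized tree's output distribution. (iii) To convert the bound on the \emph{expected} number of queries into a \emph{worst-case} bound, truncate the simulation after $K=cR(f)$ queries (outputting $0$ if truncated): by Markov's inequality the output changes on a given $x$ with probability at most $1/c$, so for a large enough constant $c$ the truncated procedure still errs with probability $<1/2$, and a constant number of independent repetitions with a majority vote restores error $<1/3$ while keeping the total query count $O(R(f))$.

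The only place that needs any care is step (iii): because truncation can only increase the error, the argument as stated really produces $Q(f)=O(R(f))$, and the clean inequality $Q(f)\le R(f)$ holds under the (equivalent up to constants) convention in which randomized cost is measured in the worst case, where step (iii) is vacuous and steps (i)--(ii) give an exact simulation. All of this is classical folklore in query complexity, so in the write-up I would keep it brief and point to the standard references \cite{DBLP:journals/tcs/BuhrmanW02,DBLP:conf/focs/BealsBCMW98}; none of the matroid-specific machinery of the paper is involved here.
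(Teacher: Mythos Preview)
The paper states this as a \emph{Fact} without proof, so there is no argument to compare against. Your proposal is correct and follows the standard folklore argument; you have also correctly flagged the one real subtlety, namely that the paper's definition of $R(f)$ uses \emph{expected} query cost, so the simulation literally yields $Q(f)=O(R(f))$ rather than $Q(f)\le R(f)$, with the sharp inequality holding under the equivalent worst-case convention. Citing \cite{DBLP:journals/tcs/BuhrmanW02,DBLP:conf/focs/BealsBCMW98} as you propose is exactly what the paper's own reference list suggests and is appropriate here.
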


\section{Lower Bound on Randomized Query Complexity}
In this section, we prove the randomized query complexity of \textbf{CMP}. We first introduce a 
special class of connected matroids of rank $r$ on $n$ elements 
with $r(n-r)+1$ bases that are studied by Dinolt~\cite{dinolt1970an}. This kind of matroid is called \emph{minimal matroid}. Then we prove a key result,  Lemma~\ref{le:nearest-to-minimal-matroid}, which shows that a 
disconnected matroid can be obtained by removing any base of a minimal
matroid. Intuitively, the resulting disconnected matroid and the minimal matroid 
are relatively hard to distinguish. This is the keypoint for the proof of both 
quantum and classical lower bounds.

\begin{lemma}[\cite{dinolt1970an}]\label{le:dinolt-matroid}
A connected matroid with rank $r$ on a ground set with cardinality $n$
has at least $r(n-r)+1$ bases. For any positive integers $n$ and $r$ with $n>r$,
there exists one (up to isomorphism) connected matroid $M=(E,\mathcal{I})$
with $|E|=n,r(E)=r$ and $|\mathcal{B}(M)|=r(n-r)+1$.
\end{lemma}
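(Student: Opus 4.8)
The plan is to treat the two assertions separately: I would prove the lower bound on the number of bases by induction on $n$, using matroid duality to halve the case analysis, and then establish the existence of an extremal matroid by an explicit construction, reading off uniqueness from the equality case of the same induction. Write $b(N)=|\mathcal{B}(N)|$ and let $U_{k,m}$ denote the uniform matroid of rank $k$ on $m$ elements. The engine of the counting is the identity $b(N)=b(N\setminus e)+b(N/e)$, valid whenever $e$ is neither a loop nor a coloop of $N$, because the bases of $N$ avoiding $e$ are exactly those of $N\setminus e$ while the bases containing $e$ are exactly the sets $B+e$ with $B$ a base of $N/e$. I would also use the elementary estimate that a \emph{loopless} matroid of rank $k$ on $m$ elements has at least $m-k+1$ bases: fixing a base $B$, for each $x\notin B$ the fundamental circuit $C(x,B)$ has at least two elements, so $B-y+x$ is a base for some $y\in C(x,B)\cap B$, and the $m-k$ bases so obtained are pairwise distinct (each contains a different $x\notin B$) and differ from $B$. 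Finally I would invoke two standard facts \cite{OUP/oxley11}: that $b(N)=b(N^{*})$ and that $N$ is connected iff $N^{*}$ is, with $(N\setminus e)^{*}=N^{*}/e$; and Tutte's theorem that a connected matroid on at least two elements has, for every element $e$, at least one of $N\setminus e$ and $N/e$ connected.

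\textbf{Lower bound.} I would induct on $n$. If $r=1$, then $M$ is connected and loopless of rank $1$, so $M=U_{1,n}$, which has $n=1\cdot(n-1)+1$ bases; if $n=r+1$, then $M$ has no coloop, so $M\setminus e=U_{r,r}$ for every $e$, hence every $r$-subset of $E$ is a base and $M=U_{r,r+1}$, again with $r\cdot1+1$ bases. Otherwise $2\le r\le n-2$, and $M$, being connected with $n\ge2$, is loopless and coloopless. By Tutte's theorem, either (a) some $e$ has $M/e$ connected, or (b) every $M\setminus e$ is connected. In case (a), $M/e$ is connected of rank $r-1$ on $n-1>r-1$ elements, so by the induction hypothesis $b(M/e)\ge(r-1)(n-r)+1$, while $M\setminus e$ is loopless of rank $r$ on $n-1$ elements, so $b(M\setminus e)\ge n-r$; adding these yields $b(M)\ge r(n-r)+1$. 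In case (b), $M^{*}$ is connected of rank $n-r$ with $2\le n-r\le n-2$ and every $M^{*}/e$ connected, so applying case (a) to $M^{*}$ gives $b(M)=b(M^{*})\ge(n-r)r+1$.

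\textbf{Existence and uniqueness.} For existence I would take $E=\{e_{1},\dots,e_{r},f_{1},\dots,f_{n-r}\}$, $B_{0}=\{e_{1},\dots,e_{r}\}$, and $\mathcal{B}=\{B_{0}\}\cup\{B_{0}-e_{i}+f_{j}:i\in[r],\,j\in[n-r]\}$; a short verification of axiom (B1) of Lemma~\ref{th:base-axioms} shows $\mathcal{B}$ is the base family of a rank-$r$ matroid $M_{r,n}$ with exactly $r(n-r)+1$ bases, and it is connected because, by Lemma~\ref{le:connected-matroid}, the circuits $\{f_{j},f_{j'}\}$ and $B_{0}+f_{j}$ already place every pair of elements in a common circuit (equivalently, $M_{r,n}$ is the parallel extension of $U_{r,r+1}$ obtained by replacing one point with $n-r$ parallel copies). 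For uniqueness, suppose $b(M)=r(n-r)+1$: in the base cases $M$ is forced to equal $U_{1,n}=M_{1,n}$ or $U_{r,r+1}=M_{r,r+1}$, and in the main case every inequality above must be an equality. In case (a) this forces $b(M/e)=(r-1)(n-r)+1$, so $M/e\cong M_{r-1,n-1}$ by inductive uniqueness, and $b(M\setminus e)=n-r$, the equality case of the loopless estimate, which forces every element outside a base to be parallel into it with no further bases; a short optimization over the parallel-class sizes then pins $M\setminus e$ down to $U_{1,n-r}\oplus U_{r-1,r-1}$, and the connectedness of $M$ singles out the unique single-element extension gluing $M\setminus e$ and $M/e$, namely $M_{r,n}$. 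Case (b) reduces to case (a) for $M^{*}$, using $M_{n-r,n}^{*}\cong M_{r,n}$.

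\textbf{Main obstacle.} The construction and the loopless estimate are routine; the delicate points are, first, framing the induction so the two sub-bounds in case (a) add up to \emph{exactly} $r(n-r)+1$ --- which is what forces the detour through the dual in case (b) --- and, second, the uniqueness step of reconstructing $M$ from the isomorphism types of $M\setminus e$ and $M/e$ together with connectedness, which requires care with single-element extensions and is where I expect the real work to lie.
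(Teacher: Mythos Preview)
The paper does not prove this lemma at all: it is quoted from Dinolt \cite{dinolt1970an} and used as a black box, so there is no in-paper argument to compare against. The only overlap is that your explicit extremal matroid $M_{r,n}$ is exactly the construction the paper records (with proof) as Lemma~\ref{le:minimal-matroid}.

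Your proposed argument for the lower bound is sound. The deletion--contraction identity $b(M)=b(M\setminus e)+b(M/e)$ for non-loop, non-coloop $e$, together with Tutte's theorem that a connected matroid on at least two elements has $M\setminus e$ or $M/e$ connected for every $e$, cleanly drives the induction; your dichotomy ``some $M/e$ connected'' versus ``every $M\setminus e$ connected'' is the right way to set it up, and passing to the dual in case~(b) avoids circularity because case~(a) at size $n$ only invokes the inductive hypothesis at size $n-1$. The auxiliary bound $b(N)\ge m-k+1$ for loopless $N$ and its equality case (forcing $N\cong U_{1,m-k+1}\oplus U_{k-1,k-1}$, via the observation that equality makes every non-base element parallel to a unique base element and that $\prod m_i=m-k+1$ with $\sum m_i=m$ pins all but one $m_i$ to $1$) are also correct.

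The one genuinely incomplete step is the one you flag yourself: in the uniqueness argument you must reconstruct $M$ from $M\setminus e\cong U_{1,n-r}\oplus U_{r-1,r-1}$ and $M/e\cong M_{r-1,n-1}$ together with connectedness of $M$. This is doable --- one can argue via the modular cuts/single-element extensions of $M\setminus e$, or more directly by showing that the fundamental circuits $C(f,B_0)$ in $M$ are forced to be $B_0+f$ for every $f\notin B_0$ --- but it is not a one-liner, and your write-up would need to spell it out to be complete.
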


\begin{lemma}[\cite{dinolt1970an}]\label{le:minimal-matroid}
Let $r$ and $n$ be two integers with $0< r<n$, and $E=\{e_1,\dots,e_r,e_{r+1},\dots,e_{n}\}$, $E_0=\{e_1,\dots,e_r\}$,
$\overline{E_{0}}=E-E_{0}=\{e_{r+1},\dots,e_{n}\}$, 
$\mathcal{C}=\{E_{0}+e_{j}:e_{j}\in\overline{E_{0}}\}$$\cup$
$\{\{e_{j},e_{j'}\}:e_{j},e_{j'}\in\overline{E_0},j\neq j'\}$.
Then $\mathcal{C}$ is the set of circuits of a minimal matroid on $E$ with rank $r$.
\end{lemma}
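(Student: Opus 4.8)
The plan is to check that $\mathcal{C}$ satisfies the circuit axioms (C1) and (C2) of Lemma~\ref{th:circuit-axioms}, so that $\mathcal{C}$ is the circuit collection of a matroid $M=(E,\mathcal{I})$ on $E$; then to identify the independent sets and bases of $M$ explicitly; and finally to read off that $r(M)=r$, that $|\mathcal{B}(M)|=r(n-r)+1$, and that $M$ is connected, the connectedness being checked via Lemma~\ref{le:connected-matroid}. Together these facts say precisely that $M$ is a minimal matroid of rank $r$ in the sense of the paragraph preceding Lemma~\ref{le:dinolt-matroid}.

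For (C1), I would use the two ``shapes'' occurring in $\mathcal{C}$: a set $\{e_j,e_{j'}\}$ with $e_j,e_{j'}\in\overline{E_0}$ has exactly two elements, both outside $E_0$, whereas a set $E_0+e_j$ contains all of $E_0$ together with exactly one element $e_j$ outside $E_0$. A direct comparison of these shapes shows that no member of $\mathcal{C}$ is contained in another, which is (C1). For (C2), let $C_1\neq C_2$ be members of $\mathcal{C}$ with $z\in C_1\cap C_2$, and split into three cases. If $C_1=E_0+e_j$ and $C_2=E_0+e_k$ with $j\neq k$, then $z\in E_0$ and $\{e_j,e_k\}\subseteq(C_1\cup C_2)-z$ is a circuit. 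If $C_1=E_0+e_j$ and $C_2=\{e_k,e_l\}$, then since $C_2\subseteq\overline{E_0}$ meets $C_1$ only in $e_j$ we get $z=e_j\in\{e_k,e_l\}$, say $e_j=e_k$, and then $E_0+e_l\subseteq(C_1\cup C_2)-z$ is a circuit. If $C_1=\{e_j,e_k\}$ and $C_2=\{e_l,e_m\}$ are distinct $2$-subsets of $\overline{E_0}$ sharing $z$, the two remaining elements again form a pair inside $\overline{E_0}$, a circuit contained in $(C_1\cup C_2)-z$. In each case the required $C_3$ is exhibited directly.

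Next I would describe $\mathcal{I}$: a set $S\subseteq E$ contains no member of $\mathcal{C}$ exactly when $|S\cap\overline{E_0}|\leq 1$ and, whenever $|S\cap\overline{E_0}|=1$, $E_0\not\subseteq S$. From this the maximal independent sets are seen to be $E_0$ itself together with the sets $(E_0-e_i)+e_j$ for $e_i\in E_0$ and $e_j\in\overline{E_0}$; these $1+r(n-r)$ sets are pairwise distinct and all have cardinality $r$, so $r(M)=r(E)=r$ and $|\mathcal{B}(M)|=r(n-r)+1$. For connectedness, by Lemma~\ref{le:connected-matroid} it suffices to produce, for every pair of distinct elements of $E$, a circuit containing both: two elements of $E_0$, or one element of $E_0$ and one of $\overline{E_0}$, both lie in $E_0+e_j$ for a suitable $e_j\in\overline{E_0}$ (which exists since $n>r$), while two elements of $\overline{E_0}$ already form a circuit. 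Hence $M$ is connected, completing the proof.

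The computations are all routine; the only point demanding care is the boundary behaviour. When $r=1$ the collection $\mathcal{C}$ becomes the set of all $2$-element subsets of $E$ (the rank-$1$ uniform matroid on $E$), and when $n-r=1$ it reduces to $\{E\}$ (the rank-$(n-1)$ uniform matroid); in each of these cases one of the two families defining $\mathcal{C}$ is empty, so I would phrase the case analysis for (C2) and the description of $\mathcal{I}$ so that they remain valid there. I expect the main, and rather minor, obstacle to be organizing the (C2) verification uniformly across the different circuit shapes rather than any substantive mathematical difficulty.
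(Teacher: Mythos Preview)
Your proposal is correct and follows essentially the same route as the paper's proof: verify (C1) and (C2) for $\mathcal{C}$, check connectedness by exhibiting a circuit through any pair of elements (exactly Lemma~\ref{le:connected-matroid}), and identify the bases as $\{E_0\}\cup\{E_0-e_i+e_j:e_i\in E_0,\;e_j\in\overline{E_0}\}$ to read off the rank and the base count $r(n-r)+1$. The paper merely asserts these steps as ``easy to verify'' and ``not hard to find,'' whereas you spell out the case analysis; otherwise the arguments coincide.
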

\begin{proof}
By the Circuit Axioms, it is easy to verify that $\mathcal{C}$ is the set of circuits
of a matroid on $E$ with rank $r$. We denote the matroid by $M(\mathcal{C})$. 
For any two distinct elements $x$ and $y$ of $E$, it is not hard to find an element
of $\mathcal{C}$ that contains both $x$ and $y$. This implies that $M(\mathcal{C})$
is connected.

Furthermore, we can get the set $\mathcal{B}(M(\mathcal{C}))$ of bases of $M(\mathcal{C})$ is 
\[\mathcal{B}(M(\mathcal{C})) =\{E_{0}\}\cup \{E_{0}-e_{i}+e_{j}:e_{i}\in E_{0},e_{j}\in\overline{E_{0}}\},\]
and $|\mathcal{B}(M(\mathcal{C}))|=r(n-r)+1$. Therefore, $M(\mathcal{C})$ is a minimal matroid.
\end{proof}

In the following, we prove a fundamental lemma that is used to prove the lower bounds 
of both the quantum and classical query complexity. Generally, the removal (a base) operation is not a 
well-defined operation on matroids. But for some special matroids (for instance, the minimal 
matroids), we can construct new matroids by the removal operations.

\begin{lemma}\label{le:nearest-to-minimal-matroid}
Let $M$ be the minimal matroid defined by $\mathcal{C}$ on $E$ as in the proof of Lemma~\ref{le:minimal-matroid}, $\mathcal{B}$ the set of bases of $M$. 
Then for any $B\in\mathcal{B}$, there is a disconnected matroid $M_{B}$ on $E$ with rank $r$ such that $\mathcal{B}-B$ is the set of bases of $M_{B}$.
\end{lemma}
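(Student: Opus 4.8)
The goal is to show that deleting one base $B$ from the base collection $\mathcal{B}$ of the minimal matroid $M$ still leaves a (necessarily disconnected) matroid of rank $r$. The natural strategy is to use the Base Axioms (Lemma~\ref{th:base-axioms}): I will show $\mathcal{B} - B$ is nonempty and satisfies (B1). Nonemptiness is immediate since $|\mathcal{B}| = r(n-r)+1 \geq 2$ whenever $r(n-r) \geq 1$, which holds because $0 < r < n$. Every member of $\mathcal{B} - B$ still has size $r$, so the rank of $M_B$ will be $r$ once (B1) is verified.

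The heart of the matter is checking the exchange property (B1) for $\mathcal{B} - B$. Recall the explicit description
$\mathcal{B} = \{E_0\} \cup \{E_0 - e_i + e_j : e_i \in E_0,\ e_j \in \overline{E_0}\}$,
so a general base other than $E_0$ swaps exactly one ``left'' element $e_i$ for one ``right'' element $e_j$. I would split into two cases according to whether the deleted base $B$ equals $E_0$ or has the form $E_0 - e_{i_0} + e_{j_0}$. In each case, given $B_1, B_2 \in \mathcal{B} - B$ and $x \in B_1 - B_2$, the original matroid $M$ guarantees some $y \in B_2 - B_1$ with $B_1 + y - x \in \mathcal{B}$; the only thing that can go wrong is that the \emph{unique} such $y$ produces exactly the forbidden base $B$. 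So the real work is: for each pair $(B_1,B_2)$ in $\mathcal{B}-B$ and each $x \in B_1 - B_2$, count how many valid exchange partners $y$ exist in $M$, and argue that if one of them yields $B$, another choice of $y$ still works within $\mathcal{B}-B$. Because the bases here are so structured, one expects that whenever $B_1 + y - x = B$ is forced, either $x$ can be re-chosen or $B_1$ and $B_2$ are close enough to $E_0$ that an alternative swap is visibly available; this is a finite case analysis over the shapes $\{E_0,\ E_0 - e_i + e_j\}$ of $B_1$ and $B_2$.

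The main obstacle I anticipate is precisely the ``unique exchange'' bad case: in a minimal matroid many pairs of bases differ in exactly one element (e.g. $E_0$ and $E_0 - e_i + e_j$ differ by the single swap $e_i \leftrightarrow e_j$), so for such a pair there may be only one legal $y$, and if that $y$ reconstructs $B$ we cannot simply pick another. I would handle this by checking that such tight pairs never both lie in $\mathcal{B} - B$ simultaneously with the bad $y$ — concretely, if $B = E_0$ then $B_1, B_2$ are both of the swapped form and one can show an exchange avoiding $E_0$; if $B = E_0 - e_{i_0} + e_{j_0}$, then the only risky configurations involve $B_1$ or $B_2$ equal to $E_0$, and a direct computation of the fundamental circuits $C(e_j, E_0) = E_0 + e_j$ shows enough freedom in choosing $y \in E_0 - B_1$. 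Alternatively — and this may be cleaner — I could avoid (B1) entirely by exhibiting $M_B$ via its circuits: argue that deleting $B$ corresponds to enlarging the circuit family $\mathcal{C}$ by adding a suitable new circuit (or shrinking a dependent set into a circuit) so that $B$ is no longer independent while all other bases survive, then verify the Circuit Axioms (Lemma~\ref{th:circuit-axioms}) for the new family. Finally, disconnectedness of $M_B$ is automatic: $M$ is connected with the \emph{minimum} possible number of bases $r(n-r)+1$ by Lemma~\ref{le:dinolt-matroid}, so any rank-$r$ matroid on $E$ with fewer bases — in particular $M_B$, which has $r(n-r)$ bases — cannot be connected.
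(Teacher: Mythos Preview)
Your plan is correct and follows essentially the same route as the paper: verify the Base Axiom (B1) for $\mathcal{B}-B$ by splitting into the two cases $B=E_0$ and $B=E_0-e_{i_0}+e_{j_0}$, then do a short case analysis on the shapes of $B_1,B_2$, and finally invoke Lemma~\ref{le:dinolt-matroid} to conclude disconnectedness from $|\mathcal{B}-B|=r(n-r)$. One small caution: in (B1) the element $x\in B_1-B_2$ is universally quantified, so ``re-choosing $x$'' is not an option---but as the paper's explicit check shows, for each fixed $x$ there is always an admissible $y$ (in the only delicate subcase $x_1\neq x_2$, $y_1\neq y_2$ with $B=E_0$, the choice $y=y_2$ works when $x=y_1$ and $y=x_1$ works when $x=x_2$).
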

\begin{proof}
We prove that the set $\mathcal{B}-B$ satisfies the condition in the Base Axioms. For $r<n$,
it is easy to see that $\mathcal{B}-B$ is not empty. To prove $\mathcal{B}-B$ satisfies
the condition (B1), we proceed by considering the following two cases.

\textbf{Case\;1}. If $B=E_0$, we show that $\mathcal{B}_{1}=$
$\mathcal{B}-B=\{E_{0}-e_{i}+e_{j}:e_{i}\in E_{0},e_{j}\in\overline{E_{0}}\}$
satisfies the Base Axioms. Let $B_{1}=E_{0}-x_{1}+y_{1}$ and 
$B_{2}=E_{0}-x_{2}+y_{2}$, where $x_{1},x_{2}\in E_{0}$ and 
$y_{1},y_{2}\in\overline{E_{0}}$. Assume that $B_{1}\neq B_{2}$.

\begin{description}
\item[(1)] If $x_{1}=x_{2}$, then it implies $y_{1}\neq y_{2}$. We have 
$B_{1}+y_{2}-y_{1}=B_{2}\in\mathcal{B}_{1}$.

\item[(2)] If $y_{1}=y_{2}$, then it implies $x_{1}\neq x_{2}$. We have 
$B_{1}+x_{1}-x_{2}=B_{2}\in\mathcal{B}_{1}$.

\item[(3)] If $x_{1}\neq x_{2}$ and $y_{1}\neq y_{2}$, then
$B_{1}-B_{2}=\{x_2,y_{1}\},B_{2}-B_{1}=\{x_{1},y_{2}\}$.
We have $B_{1}+x_{1}-x_{2}=E_{0}-x_{2}+y_{1}\in\mathcal{B}_{1}$.
\end{description}
Thus, we can see that $\mathcal{B}_{1}$ satisfies the Base Axioms.

\textbf{Case\;2}. If $B=E_{0}-x+y$ for $x\in E_{0},y\in\overline{E_{0}}$, we show that $\mathcal{B}_{2}=\mathcal{B}-B$ satisfies the Base Axioms. 
It is easy to check that $B_{1}+y-x=B_{2}\in\mathcal{B}_{2}$ if $B_{1}=E_{0}, B_{2}=E_{0}-x'+y'\neq B$ for $x'\in E_{0}$ and $y'\in\overline{E_{0}}$.

Let $B_{1}=E_{0}-x_{1}+y_{1}\neq B$ and $B_{2}=E_{0}-x_{2}+y_{2}\neq B$,
where $x_{1},x_{2}\in E_{0}$ and $y_{1},y_{2}\in\overline{E_{0}}$. Assume that 
$B_{1}\neq B_{2}$. The cases $x_{1}=x_{2}$ and $y_{1}=y_{2}$ are the same as \textbf{(1)} and \textbf{(2)} in \textbf{Case\;1}, respectively. If $x_{1}\neq x_{2}$ and $y_{1}\neq y_{2}$, then $B_{1}-B_{2}=\{x_{2},y_{1}\}$ and $B_{2}-B_{1}=\{x_{1},y_{2}\}$. We have  $B_{1}+x_{1}-y_{1}=E_{0}\in\mathbf{B}_2$. Thus, we can see that $\mathcal{B}_{2}$ satisfies the Base Axioms.

Combining \textbf{Case\;1} and \textbf{Case\;2}, we know that for any $B\in\mathcal{B}$, $\mathcal{B}-B$ is the set of bases of a matroid on $E$ with rank $r$. We denote this 
matroid as $M_{B}$. Because $r(M_{B})=r$ and $|\mathcal{B}-B|=r(n-r)$, by Lemma~\ref{le:dinolt-matroid}, it follows that $M_{B}$ is disconnected.
\end{proof}

The following theorem is the main result of this section.
\begin{theorem}\label{th:radomized-query-complexity}
Given a matroid $M$ on a ground set with cardinality $n$ accessed by 
the independence oracle, any randomized algorithm 
that determines  whether $M$ 
is connected with bounded error $1/3$ needs at least $\Omega(n^2)$ 
queries to the independence oracle.
\end{theorem}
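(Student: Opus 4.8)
The plan is to apply Yao's minimax principle: exhibit a hard distribution $\mu$ over instances of \textbf{CMP} on which every deterministic algorithm that errs with probability less than $1/3$ must make $\Omega(n^2)$ queries in expectation, which then lower-bounds $R(\textbf{CMP})$. For concreteness fix $r=\lfloor n/2\rfloor$ so that the minimal matroid $M$ of Lemma~\ref{le:minimal-matroid} has $r(n-r)+1=\Theta(n^2)$ bases. The distribution $\mu$ puts weight $1/2$ on the "yes"-instance $M$ itself and spreads the remaining weight $1/2$ uniformly over the $\Theta(n^2)$ "no"-instances $M_B$ produced by Lemma~\ref{le:nearest-to-minimal-matroid}, one for each base $B\in\mathcal{B}(M)$. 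The crucial observation is that $M$ and $M_B$ have identical independence oracles on \emph{every} subset of $E$ except $B$ itself: the only set whose independence status differs is the removed base $B$, which is independent in $M$ but dependent in $M_B$ (it is no longer a base, and since it has size $r=r(M_B)$ it cannot be independent there). Hence an algorithm can only distinguish $M$ from a particular $M_B$ by querying the oracle exactly on the set $B$.

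The heart of the argument is then a needle-in-a-haystack/counting step. I would run any deterministic algorithm $A$ on the input $M$; let $S_1,\dots,S_t$ be the (adaptively chosen, but on this fixed input deterministic) sequence of sets it queries, and note that on input $M$ the algorithm's behavior is fixed. For $A$ to answer correctly on $M$ it must output "connected"; but then on every $M_B$ with $B\notin\{S_1,\dots,S_t\}$ the algorithm receives exactly the same oracle answers as on $M$ (the two oracles agree off $B$), so it traverses the identical computation path and also outputs "connected" — which is wrong for $M_B$. Therefore, writing $q$ for the expected number of queries $A$ makes on $M$, the algorithm can answer correctly on at most $q$ of the $\Theta(n^2)$ "no"-instances, so its error under the "no"-part of $\mu$ is at least $\tfrac12\cdot\bigl(1-\tfrac{q}{\Theta(n^2)}\bigr)$. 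For the total error to be below $1/3$ we need this to be small, forcing $q=\Omega(n^2)$; combined with an analogous (trivial) bookkeeping on the "yes" side and Yao's principle, this gives $R(\textbf{CMP})=\Omega(n^2)$. (One can either argue directly on the deterministic algorithm that is optimal for $\mu$, or phrase the whole thing as: any algorithm making $o(n^2)$ queries misses the "planted" base $B$ with probability $1-o(1)$, hence errs with probability close to $1/2$ on the "no" part.)

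The main technical obstacle — and the place to be careful — is verifying that $M$ and $M_B$ really do differ \emph{only} on the single set $B$, so that the information-theoretic "one needle" picture is exact rather than approximate. This requires checking: (a) every proper subset of $B$ is independent in both matroids (true because $\mathcal{B}(M_B)=\mathcal{B}(M)\setminus\{B\}$ still covers all of $E$ with bases of size $r$ — indeed Lemma~\ref{le:nearest-to-minimal-matroid} already tells us $M_B$ is a matroid of rank $r$, and by I1 a subset of $B$ of size $<r$ lies in some other base); (b) every set of size $\ge r+1$ or every set not contained in a base behaves identically; and (c) sets of size $r$ other than $B$: a size-$r$ set is independent iff it is a base, and $M,M_B$ share all bases except $B$. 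Points (b) and (c) are immediate once one recalls that a set is independent iff it is contained in some base, so two matroids with base families differing by one set $B$ have independence oracles differing only on subsets of $B$ that are not contained in any other common base — and since all bases have the same size $r$, the only such subset is $B$ itself. The rest is the routine minimax bookkeeping sketched above; no delicate estimates are needed beyond the count $|\mathcal{B}(M)|=\Theta(n^2)$.
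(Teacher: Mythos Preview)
Your proposal is correct and follows essentially the same approach as the paper's proof: both apply Yao's minimax principle to the hard distribution that puts mass $1/2$ on the minimal matroid $M$ of rank $\lfloor n/2\rfloor$ and mass $1/2$ spread uniformly over the $N=r(n-r)+1=\Theta(n^2)$ disconnected matroids $M_B$, and both conclude via a needle-in-a-haystack count that a deterministic algorithm with error $<1/3$ must query $\Omega(N)$ sets. Your write-up is in fact more explicit than the paper's on the one point that really needs care---namely that the independence oracles of $M$ and $M_B$ agree on every subset of $E$ except $B$ itself---which the paper uses implicitly when it asserts $\Pr(E(x)\mid i\notin K)=1/2$ without justification; your verification of (a)--(c) is exactly what is needed there.
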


\begin{proof}
 We shall show that the lower bound of the distributional 
 complexity of \textbf{CMP} is $\Omega(n^2)$. Then by Yao’s 
 minimax principle, we have $R(\textbf{CMP})=\Omega(n^2)$.

Let $M$ be a minimal matroid with rank $r$ on a ground set
of cardinality $n$, $N=r(n-r)+1$, $\mathcal{B}=\{B_i:i\in[N]\}$ the set
of bases of $M$. For any $B\in\mathcal{B}$, let $M_{B}$ be the matroid whose set of bases is determined by the set $\mathcal{B}-B$ as defined in Lemma~\ref{le:nearest-to-minimal-matroid}. 

We begin by picking the input distribution $x\sim\mu$ defined as follows:
\begin{itemize}
\item[(1)] Sample $i\in[N]$ uniformly at random.
\item[(2)] Set $x=M$ or $x=M_{B_i}$ with probability $\frac{1}{2}$.
\end{itemize}
Fix any deterministic algorithm $A$ computing
\textbf{CMP} on the distribution $\mu$ with bounded error
 $1/3$. We use $T$ to denote the number of queries
made by $A$. Let $K=\{i_1,\dots,i_T\}$ be the set of queries made
by $A$ where the answer to every one of these queries is 1 (this means 
that what we are querying is a base of the input matroid).
Since $A$ is deterministic, the set $K$ is fixed and well-defined.
For any input $x\sim\mu$ to the algorithm $A$, we have
\begin{equation}
 \Pr(i\in K)=\frac{|K|}{N}=\frac{T}{N}.    
\end{equation}
Define $E(x)$ as the event that the algorithm $A$ succeeds on computing $x$.
At the same time, conditioned on $i\notin K$, the input is $M$ or $M_{B_i}$
with probability $\frac{1}{2}$. Hence,
\begin{equation}
 \Pr(E(x)|i\notin K)=\frac{1}{2}.    
\end{equation}
Combining the above, we have
\begin{equation}
 \begin{array}{rcl}
 2/3&\leq&\Pr(E(x))_{x\sim\mu}\\
 &=&\Pr(i\in K)\Pr(E(x)|i\in K)+
 \Pr(i\notin K)\Pr(E(x)|i\notin K)\\
 &\leq&\frac{T}{N}+(1-\frac{T}{N})\cdot\frac{1}{2},
 \end{array}   
\end{equation}
which implies $T\geq N/3$. When $r=n/2$, we get $T>n^{2}/12$. Consequently, 
$D_{\mu}(\textbf{CMP})=\Omega(n^2)$, thus the randomized
query complexity of \textbf{CMP} is $\Omega(n^2)$.
\end{proof}

\section{Quantum Algorithm for Determining Connected Matroids}
In this section, we give a quantum algorithm to determine whether a matroid is connected.
Firstly, we introduce the partial representation of matroids and the 
corresponding bipartite graphs. The relation between the connectedness of 
matroids and the connectedness of the corresponding bipartite graphs is 
given in Lemma~\ref{le:relationship-matroid-and-bipartite-graph}. This leads 
to an efficient algorithm to determine the connectedness of matroids. 
Secondly, we construct the oracle for the adjacency matrix model of the
corresponding bipartite graph from the independence oracle of the matroid.
Finally, we introduce the quantum search algorithm in Lemma~\ref{le:grover-algorithm}. Combining the above, we give a quantum algorithm. \\

\paragraph{Partial Representation.} Partial representation is introduced by
Truemper~\cite{truemper1984partial}. Let $M$ be a matroid on a finite ground set 
$E$, and $B$ a base of $M$. The \emph{partial representation} of $M$ with respect
to $B$ is a $\{0, 1\}$-matrix $P$ with rows indexed on elements $x\in B$ and columns
indexed on elements $y\in\overline{B}=E-B$ such that the entry $P(x,y)$ is 1 if and
only if $x\in C(y,B)$, otherwise $P(x,y)$ is 0. A bipartite graph $G(P)$ derived from
$P$ is obtained by setting one node for each row and column of $P$ and one edge for each
nonzero entry. Krogdahl~\cite{krogdahl1977dependence} proved the relationship of 
the connectedness between $M$ and $G(P)$ in Lemma~\ref{le:relationship-matroid-and-bipartite-graph}.

\begin{lemma}[\cite{krogdahl1977dependence}]
\label{le:relationship-matroid-and-bipartite-graph}
Let $M$ be a matroid, $B$ a base of $M$, $P$ the partial representation corresponding
to $B$ and $G(P)$ the associated bipartite graph. Then $M$ is connected if and only
if $G(P)$ is connected.
\end{lemma}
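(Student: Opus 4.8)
The statement is due to Krogdahl~\cite{krogdahl1977dependence}; here is how I would prove it. The plan is to establish the two implications separately, each by contraposition, using the fundamental circuits that define $P$ as a dictionary between the $1$-separations of $M$ and the disconnections of $G(P)$. I will rely on two standard facts: (i) the rank function of $M$ is submodular, so every nonempty proper $A\subseteq E$ satisfies $r(A)+r(E-A)\ge r(E)$, and hence by Definition~\ref{de:non-separable-matroid} the matroid $M$ is disconnected if and only if some nonempty proper $A\subseteq E$ satisfies $r(A)+r(E-A)=r(E)$; and (ii) by the construction of $G(P)$, a column-vertex $y\in\overline{B}$ is adjacent in $G(P)$ exactly to the elements of $C(y,B)-y$, all of which lie in $B$.

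For ``$G(P)$ connected $\Rightarrow M$ connected'' I would argue contrapositively. Suppose the vertex set $E$ of $G(P)$ partitions as $E=S\sqcup T$ with $S,T$ nonempty and no edge between them; I claim $r(S)=|B\cap S|$ and $r(T)=|B\cap T|$, which gives $r(S)+r(T)=|B|=r(E)$ and hence shows $M$ is disconnected. Since $B\cap S$ is independent, $r(S)\ge|B\cap S|$, so it suffices to prove $S\subseteq\operatorname{cl}(B\cap S)$. For $y\in S\setminus B$, fact (ii) gives $C(y,B)-y=N_{G(P)}(y)$, which is contained in $B\cap S$ because no edge leaves $S$; since $y\in\operatorname{cl}(C(y,B)-y)$, we obtain $y\in\operatorname{cl}(B\cap S)$, and together with $B\cap S\subseteq\operatorname{cl}(B\cap S)$ this yields $S\subseteq\operatorname{cl}(B\cap S)$. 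The argument for $T$ is identical.

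For ``$M$ connected $\Rightarrow G(P)$ connected'' I would again argue contrapositively: assume $M$ is disconnected and fix a partition $E=E_1\sqcup E_2$ into nonempty sets with $r(E_1)+r(E_2)=r(E)$. The key step is to show that every circuit of $M$ is contained entirely in $E_1$ or entirely in $E_2$. Granting this, each fundamental circuit $C(y,B)$ lies in a single $E_i$, so by fact (ii) every edge of $G(P)$ joins two vertices of the same $E_i$; as $E_1,E_2$ are nonempty subsets of the vertex set of $G(P)$, the graph $G(P)$ is disconnected. To prove the key step I would first show that $r$ is additive across the separation, i.e.\ $r(X)=r(X\cap E_1)+r(X\cap E_2)$ for all $X\subseteq E$: the inequality ``$\le$'' is submodularity, and for ``$\ge$'' one picks maximal independent subsets $I_1\subseteq X\cap E_1$ and $I_2\subseteq X\cap E_2$, extends them to maximal independent subsets $J_1,J_2$ of $E_1,E_2$, and checks that $J_1\cup J_2$ is independent because its size equals $r(E_1)+r(E_2)=r(E)$ while $\operatorname{cl}(J_1\cup J_2)\supseteq E_1\cup E_2=E$ forces $r(J_1\cup J_2)=r(E)$ as well; then $I_1\cup I_2$ is an independent subset of $X$ of size $r(X\cap E_1)+r(X\cap E_2)$. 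Applying additivity with $X$ a circuit $D$: if $D$ met both $E_1$ and $E_2$, then $D\cap E_1$ and $D\cap E_2$ would be nonempty proper subsets of $D$, hence independent, so $r(D)=|D\cap E_1|+|D\cap E_2|=|D|$, contradicting that $D$ is dependent.

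I expect the main obstacle to be this key step of the second direction — the additivity of rank over a $1$-separation, equivalently the fact that a circuit cannot straddle a $1$-separation; everything else is routine bookkeeping about neighbourhoods in $G(P)$. Degenerate cases ($|E|\le 1$, or $B=\emptyset$, or $B=E$) require no separate treatment, since then either both $M$ and $G(P)$ are trivially connected and the contrapositive hypotheses cannot arise, or the arguments above apply verbatim. As an alternative to the rank computation, the second direction could instead be obtained from Lemma~\ref{le:connected-matroid} by converting a circuit through a prescribed pair of elements into a path between the corresponding vertices of $G(P)$, but that route seems less clean than the symmetric rank argument above.
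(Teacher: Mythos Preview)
The paper does not actually prove this lemma; it merely quotes the statement from Krogdahl and uses it as a black box for the algorithm in Section~4. There is therefore no in-paper argument to compare your approach against, and your self-contained proof via fundamental circuits and rank additivity across a $1$-separation is a standard and correct way to establish the result.

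One presentational slip: you have the two contrapositives mislabelled. In the paragraph headed ``$G(P)$ connected $\Rightarrow M$ connected'' you assume $G(P)$ is disconnected and deduce that $M$ is disconnected, which is the contrapositive of the \emph{other} direction ``$M$ connected $\Rightarrow G(P)$ connected''; the third paragraph has the symmetric swap. Both implications are proved, so the lemma is established --- only the two headers need to be interchanged.
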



Based on an idea similar to Lemma~\ref{le:relationship-matroid-and-bipartite-graph}, 
Cunningham~\cite{cunnigham1974a} gave a very pretty and efficient algorithm 
determining whether a matroid is connected, which takes $O(|E|^2)$ queries to the 
independence oracle. \\


\paragraph{Adjacency Matrix Model.} In the adjacency matrix model, an undirected 
graph $G=(V,A)$ is given as an adjacency matrix 
$P\in\{0,1\}^{|V|\times |V|}$ such that $P_{ij}=1$ if and only if $(v_i,v_j)\in A$ for $v_i,v_j\in V$.

Let $M$ be a matroid on $E$, $B$ a base of $M$, $P$ 
the partial representation corresponding to $B$ and $G(P)$ the 
associated bipartite graph. Since $G(P)$ is a bipartite graph,
$P$ is essentially the adjacency matrix of $G(P)$. Therefore, 
it is not difficult to construct an oracle $O$ that accesses $P$
from the independence oracle $O_{i}$ as follows:
\begin{align}\label{def:oracle_from_matroid}
     O(j,k)=\left\{
      \begin{array}{ll}
        O_{i}(B+k-j)&\text{if}\;\; j\in B\;\;\text{and}\;\; k\in \overline{B},\\
        0&\text{otherwise}.
      \end{array}
      \right.
\end{align}

In~\cite{durr2006quantum}, D\"{u}rr et al. give a quantum algorithm to determine 
the connectedness of graphs and show that the quantum query complexity is 
$\Theta(n^{3/2})$ in the adjacency matrix model, where $n$ is the number of 
vertices in the graph. Since we only need to determine the connectedness of $G(P)$
without finding its spanning tree, and deal with bipartite graphs, we give 
a quantum algorithm that is actually more efficient with the quantum depth-first search method.\\

\paragraph{Quantum search.} A search problem on a set $[n]$ with $n$ elements is a subset $J\subseteq [n]$ with the
characteristic function $f:[n]\rightarrow\{0,1\}$ such that
\begin{equation*}
 f(x) = \left\{\begin{array}{ll}
  1  & \text{if}\;x\in J, \\
  0  & \text{otherwise}.
 \end{array}
 \right.
\end{equation*}
Any $x\in J$ is called a solution of the search problem. 

\begin{lemma}[Grover's algorithm \cite{WOL/Boyer98,DBLP:conf/stoc/Grover96}]
  \label{le:grover-algorithm}
  Let $J$ be a search problem on a set $[n]$ with $n$ elements, $|J|=k$ and 
  $f$ the characteristic function of $J$. Given a search space 
  $S\subseteq [n]$ with $|S|=N$, the expected quantum query 
  complexity for finding one solution of $J$ is $O(\sqrt{N/k})$ for $k>0$.
  Whether $J\cap S$ is empty can be determined in $O(\sqrt{N})$
  quantum queries to $f$ with success probability at least 2/3. 
\end{lemma}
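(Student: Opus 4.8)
The plan is to build the statement up from the standard Grover iteration and then handle the two features that distinguish it from textbook Grover search: the number of solutions $k$ is not known in advance, and we search only inside a prescribed subset $S\subseteq[n]$ rather than all of $[n]$ (so that $N=|S|$ takes the role usually played by $n$, with the $k$ solutions assumed to lie in $S$). First I would set up the two-dimensional invariant subspace. Starting from the uniform superposition $\ket{\psi_S}=\frac{1}{\sqrt{N}}\sum_{x\in S}\ket{x}$, the Grover operator $G$ --- the composition of the oracle reflection that flips the phase of the solutions and the reflection about $\ket{\psi_S}$ --- acts as a rotation by angle $2\theta$ in the plane spanned by the uniform superposition over the $k$ solutions and the uniform superposition over the $N-k$ non-solutions, where $\sin\theta=\sqrt{k/N}$. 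Hence after $t=\Theta(1/\theta)=\Theta(\sqrt{N/k})$ applications of $G$ to $\ket{\psi_S}$, a measurement in the computational basis returns an element of $S$ that is a solution with constant probability, and a single extra query to $f$ verifies it. This already gives the $O(\sqrt{N/k})$ bound when $k$ is known.

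For unknown $k>0$ I would wrap this in the standard exponential-search scheme of Boyer, Brassard, H\o yer and Tapp: maintain a parameter $m$, initially $1$; in each round pick a number $t$ of Grover iterations uniformly at random from $\{0,1,\dots,\lceil m\rceil-1\}$, apply $G^t$ to $\ket{\psi_S}$, measure, and verify the outcome with one query; on failure, replace $m$ by $cm$ for a fixed constant $c\in(1,2)$ and repeat. Averaging the Grover success probability over a uniformly random iteration count $t$ removes its oscillatory dependence on $t$, so that once $m=\Theta(\sqrt{N/k})$ each round succeeds with probability bounded below by a constant; summing the geometric series $\sum_j c^j$ up to that threshold shows the \emph{expected} total number of queries is $O(\sqrt{N/k})$. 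Nothing in this argument uses $S=[n]$: the state-preparation unitary for $\ket{\psi_S}$ is fixed and query-free, so the analysis is identical with $N$ in place of $n$.

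For deciding whether $J\cap S=\emptyset$ I would run the same wrapper but cap the cumulative query budget at $c'\sqrt{N}$ for a suitable constant $c'$: if some measured element passes its verification query within the budget, output that $J\cap S$ is nonempty; otherwise output that it is empty. This test is one-sided, since a ``nonempty'' answer is certified by an actually verified solution, so only the probability of wrongly declaring emptiness when $k\geq1$ must be controlled. The worst case there is $k=1$, where a budget of order $\sqrt{N}$ affords several rounds with $m$ of order $\sqrt{N}$, each of which fails to produce the unique solution only with constant probability; taking $c'$ large enough (equivalently, repeating the whole test a constant number of times and taking the OR of the answers) drives the overall error below $1/3$, so the test succeeds with probability at least $2/3$ using $O(\sqrt{N})$ queries.

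The only part that requires genuine care, rather than bookkeeping, is the unknown-$k$ analysis of the second paragraph: one must check that randomizing the iteration count makes overshooting the optimal number of Grover steps harmless, and that the geometric growth of $m$ yields the stated \emph{expected} --- not worst-case --- query complexity. The two-dimensional rotation picture and the restriction to the search space $S$ are then routine, which is why I would present the full argument only for the wrapper and cite the original papers for the rotation lemma itself.
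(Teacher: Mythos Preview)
The paper does not prove this lemma at all: it is stated as a known result with citations to Grover and to Boyer--Brassard--H{\o}yer--Tapp, and is used as a black box in the analysis of Algorithm~1. Your proposal is a correct outline of the standard proof from precisely those references --- the two-dimensional rotation picture for known $k$, the exponential-search wrapper of BBHT for unknown $k$, and the budget-capped variant for the decision version --- so there is no discrepancy to discuss beyond the fact that you supplied a proof where the paper supplied a citation.
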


Our main result in this section is as follows.
\begin{theorem}\label{th:quantum-algorithm-for-non-separable-testing}
Given a matroid $M$ on a ground set with cardinality $n$ accessed by 
the independence oracle, there is a quantum 
algorithm that determines  whether $M$ is connected with bounded error 
$1/3$ using $O(n^{3/2})$ queries to the independence oracle.
\end{theorem}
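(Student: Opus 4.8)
The plan is to reduce the Connected Matroid Problem to deciding connectedness of the bipartite graph $G(P)$ associated with a partial representation of $M$, and then to solve that graph problem by a quantum depth‑first search built on Grover's algorithm.

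First I would compute a base $B$ of $M$ greedily with the independence oracle: starting from $\emptyset$ and scanning $e_1,\dots,e_n$, attempt to add each $e_i$ to the current independent set and keep it iff the oracle answers $1$. This produces a base in exactly $n$ queries, i.e.\ $O(n)$ queries to $O_i$. With $B$ fixed, the map in~(\ref{def:oracle_from_matroid}) supplies an oracle $O$ for the adjacency matrix of $G(P)$ for which each call to $O$ costs exactly one call to $O_i$; note that $G(P)$ has $|B|=r$ vertices on one side and $|\overline{B}|=n-r$ on the other, hence $n$ vertices in total. By Lemma~\ref{le:relationship-matroid-and-bipartite-graph}, $M$ is connected iff $G(P)$ is connected, so it suffices to decide connectivity of $G(P)$ using $O(n^{3/2})$ queries to $O$.

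Next I would run a quantum DFS on $G(P)$. Maintain a set $U$ of visited vertices, initialized to a single vertex $v_0$, together with a DFS stack. Repeatedly let $v$ be the vertex on top of the stack and use Grover's algorithm (Lemma~\ref{le:grover-algorithm}), with search space the $n$ vertices of $G(P)$, to look for a $w\notin U$ with $\{v,w\}$ an edge of $G(P)$; if such a $w$ is found, add it to $U$ and push it onto the stack, otherwise pop $v$. Each search either discovers a new vertex — which happens at most $n-1$ times — or triggers a backtrack — at most $n$ times — so the DFS performs $O(n)$ searches, each costing $O(\sqrt{n})$ queries to $O$ by Lemma~\ref{le:grover-algorithm}; this is $O(n^{3/2})$ queries to $O$, hence $O(n^{3/2})$ queries to $O_i$, which together with the $O(n)$ queries for $B$ gives $O(n^{3/2})$ in total. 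When the stack empties, $G(P)$, and therefore $M$, is connected iff $|U|=n$. Since $G(P)$ is bipartite one may additionally restrict each search from a left (resp.\ right) vertex to the right (resp.\ left) side, shrinking the search space, but this affects only constants.

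The main obstacle is the error analysis: the DFS calls the bounded‑error Grover subroutine $\Theta(n)$ times, so a crude union bound would force each call to have error $O(1/n)$, costing an extra $O(\log n)$ factor. To keep the total at $O(n^{3/2})$ one needs the more careful handling of accumulated error in quantum graph traversal due to D\"urr et al.~\cite{durr2006quantum}: false positives are removed cheaply by verifying a returned edge with one extra query to $O$, and the overall false‑negative probability of the traversal is controlled without per‑step amplification (and here the fact that we only need to \emph{decide} connectedness, not output a spanning tree, gives additional slack). One must also check that, conditioned on all searches answering correctly, the procedure produces a spanning tree of $G(P)$ exactly when $G(P)$ is connected — this is the standard correctness of DFS and is routine.
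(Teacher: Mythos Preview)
Your proposal is correct and follows essentially the same route as the paper: find a base greedily, reduce to connectivity of $G(P)$ via Lemma~\ref{le:relationship-matroid-and-bipartite-graph} and the oracle in~\eqref{def:oracle_from_matroid}, then decide connectivity by a quantum depth-first search that uses Grover to locate an unvisited neighbour, counting $O(n)$ searches at $O(\sqrt{n})$ each. You are in fact more careful than the paper about the accumulated error over $\Theta(n)$ Grover calls (the paper does not discuss this point at all), and your remark about restricting each search to the opposite side of the bipartition is exactly what the paper exploits to state the per-search cost as $O(\sqrt{n-r})$ or $O(\sqrt{r})$.
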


\begin{proof}
Consider Algorithm~\ref{al:test-matroid-connected-quantum}, 
after finding a base $B$ of $M$ using a greedy algorithm, 
it uses the depth-first search method to determine whether the associated
bipartite graph is connected. In Line 5, the oracle $O$ used by Grover's
algorithm is defined by Equation~\eqref{def:oracle_from_matroid}.
Thus, it is not difficult to see the correctness of Algorithm~\ref{al:test-matroid-connected-quantum}.

\begin{algorithm}
  \caption{A quantum algorithm deciding whether a matroid is connected.}
  \label{al:test-matroid-connected-quantum}
  \begin{algorithmic}[1]
    \REQUIRE A matroid $M$ accessed through the independence oracle $O_i$.
    \ENSURE YES if $M$ is connected, otherwise NO.
    \STATE $B\leftarrow\;$Find a base of $M$ with greedy algorithm.
    \STATE Let $s\in B$, PUSH($S$,$s$).\textcolor{green!50!black}{//S is a stack, and PUSH, TOP, and POP are operations of the stack.}
    \WHILE{$S$ is not empty}
    \STATE $u\leftarrow$ TOP($S$). 
    \STATE Use Grover's algorithm to find an unmarked vertex $v$ with $O(u,v)=1$.
    \IF{such a vertex $v$ exists}
    \STATE PUSH($S$,$v$)
    \ELSE 
    \STATE $t\leftarrow$ POP($S$), mark $t$ as visited.
    \ENDIF
    \ENDWHILE
    \IF {all the vertices are visited}
    \RETURN YES.
    \ELSE
    \RETURN NO.
    \ENDIF
  \end{algorithmic}
\end{algorithm}

Next, we prove the query complexity of Algorithm~\ref{al:test-matroid-connected-quantum}.
Let $M$ be defined on $E$ whose cardinality is $n$, $|B|=r$, and $\overline{B}=E-B$. Finding a base $B$ 
of $M$ needs $O(n)$ queries to the independence oracle. For any $s\in B$, every vertex is 
discovered by the depth-first search at most once. In the adjacency matrix model,
if $u\in B$, then the next vertex is found by $O(\sqrt{n-r})$ quantum queries to the
independence oracle. If $u\in\overline{B}$, it takes $O(\sqrt{r})$ queries.
The total query is $O(n)+(n-r)O(\sqrt{n-r})+rO(\sqrt{r})$, which is bounded by $O(n^{3/2})$.
\end{proof}

\section{Lower Bound on Quantum Query Complexity}
In this section, we prove the quantum query complexity of \textbf{CMP}. Firstly, we introduce the 
quantum adversary method proposed by Ambainis~\cite{DBLP:conf/stacs/AmbainisS06} in Lemma~\ref{le:ambainis}, which is 
one of the common methods to prove the lower bound of quantum query 
complexity. In order to use the quantum adversary method to prove the 
quantum query complexity of \textbf{CMP}, we need to encode matroids to 
0-1 strings as done in graphs, where each bit represents a subset of the 
ground set. Then we can express \textbf{CMP} by a Boolean function.
And the function is a partial function since not every string is a 
representation of a matroid.
Combining Lemma~\ref{le:minimal-matroid} and 
Lemma~\ref{le:nearest-to-minimal-matroid}, we can give a lower bound for 
quantum complexity of \textbf{CMP}.

\begin{lemma}[Ambainis’s quantum adversary method \cite{DBLP:conf/stacs/AmbainisS06}]
\label{le:ambainis}
 Let $f$ be a function of $n$ $\{0,1\}$-valued variables, and $X,Y$ two sets of inputs 
 such that $f(x)\neq f(y)$ if $x\in X$ and $y\in Y$. Let $R\subseteq X\times Y$ be a
 relation such that
\begin{itemize}
\item [1.] for every $x\in X$, there are at least $m$ different $y\in Y$ such that
$(x,y)\in R$;
\item [2.] for every $y\in Y$, there are at least $m'$ different $x\in X$ such that
$(x,y)\in R$;
\item [3.] for every $x\in X$ and $i\in[n]$, there are at most $l$ different $y\in Y$ 
such that $(x,y)\in R$ and $x_i\neq y_i$;
\item [4.] for every $y\in Y$ and $i\in[n]$, there are at most $l'$ different $x\in X$ 
such that $(x,y)\in R$ and $x_i\neq y_i$.
\end{itemize}
Then any quantum algorithm computing $f$ with bounded error 1/3 takes at least 
$\Omega(\sqrt{\frac{mm'}{ll'}})$ queries.
\end{lemma}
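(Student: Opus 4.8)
The plan is to prove this bound by the standard progress-measure (potential function) argument that underlies Ambainis's quantum adversary method. Consider any quantum algorithm making $T$ queries, written as the alternating product $U_T O_z U_{T-1}\cdots O_z U_0$ of input-independent unitaries and query operators $O_z$ (the XOR oracle $O_z:|i,b\rangle\mapsto|i,b\oplus z_i\rangle$ of the query model). For an input $z$ let $|\psi_z^t\rangle$ be the state after $t$ queries. I would track the potential
\[
W_t=\sum_{(x,y)\in R}\langle\psi_x^t|\psi_y^t\rangle,
\]
and control its value initially, its value at the end, and how much a single query can change it.

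First I would pin down the two boundary values. Since the starting state is independent of the input, $|\psi_x^0\rangle=|\psi_y^0\rangle$ for every pair, so each inner product is $1$ and $W_0=|R|$. At the end, every $(x,y)\in R$ has $f(x)\neq f(y)$ by hypothesis; because the algorithm is correct with probability at least $2/3$, some measurement distinguishes $|\psi_x^T\rangle$ from $|\psi_y^T\rangle$ with constant advantage, which forces $|\langle\psi_x^T|\psi_y^T\rangle|\le c$ for a fixed constant $c<1$. Hence $|W_T|\le c|R|$, and by the triangle inequality $|W_0-W_T|\ge |W_0|-|W_T|\ge(1-c)|R|=\Omega(|R|)$.

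The heart of the argument is bounding $|W_t-W_{t-1}|$. Because the $U_i$ are input-independent unitaries they preserve every inner product, so only the query step alters $W_t$. Writing the pre-query state in the query-index basis as $|\phi_z\rangle=\sum_i|i\rangle|\xi_{i,z}\rangle$ and setting $\alpha_i^z=\||\xi_{i,z}\rangle\|$, the operator $O_x^\dagger O_y-I$ is zero on components with $x_i=y_i$ and acts as a bit flip minus identity (of norm at most $2$) where $x_i\neq y_i$, which gives
\[
|W_t-W_{t-1}|\le 2\sum_{(x,y)\in R}\ \sum_{i:\,x_i\neq y_i}\alpha_i^x\alpha_i^y.
\]
I would then split the product by weighted AM--GM, $\alpha_i^x\alpha_i^y\le\tfrac12(\beta(\alpha_i^x)^2+\beta^{-1}(\alpha_i^y)^2)$, for a parameter $\beta>0$ to be optimized. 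Using $\sum_i(\alpha_i^z)^2=1$ together with condition 3 (each pair $(x,i)$ has at most $l$ partners $y$) bounds the first piece by $l|X|$, while condition 4 bounds the second by $l'|Y|$; choosing $\beta$ to balance them yields $\sum_{(x,y),\,x_i\neq y_i}\alpha_i^x\alpha_i^y\le\sqrt{ll'|X||Y|}$. Finally, conditions 1 and 2 give $m|X|\le|R|$ and $m'|Y|\le|R|$, hence $|W_t-W_{t-1}|\le 2|R|\sqrt{ll'/(mm')}$.

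Combining the three estimates through $(1-c)|R|\le|W_0-W_T|\le\sum_{t=1}^T|W_t-W_{t-1}|\le 2T|R|\sqrt{ll'/(mm')}$ and cancelling $|R|$ gives $T\ge\tfrac{1-c}{2}\sqrt{mm'/(ll')}=\Omega(\sqrt{mm'/(ll')})$, as claimed. I expect the main obstacle to be the per-query step: one must argue carefully that the query oracle couples $|\psi_x^t\rangle$ and $|\psi_y^t\rangle$ only through the bit positions where $x$ and $y$ differ, correctly bound $\|O_x^\dagger O_y-I\|$ on that subspace, and apply Cauchy--Schwarz/AM--GM so the four combinatorial parameters $m,m',l,l'$ enter with exactly the right exponents. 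The boundary values and the final telescoping are comparatively routine.
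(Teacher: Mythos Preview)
The paper does not prove this lemma at all: it is stated as a known result and attributed to Ambainis via the citation, so there is nothing to compare your argument against in the paper itself. Your sketch is the standard progress-measure proof of the basic adversary bound (track $W_t=\sum_{(x,y)\in R}\langle\psi_x^t|\psi_y^t\rangle$, bound it at $t=0$ and $t=T$, and control the per-query change via AM--GM with the four parameters), and it is correct as outlined; this is essentially the argument in Ambainis's original paper, which the present paper simply quotes.
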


Our main result in this section is as follows.
\begin{theorem}\label{th:quantum-query-complexity}
Given a matroid $M$ on a ground set with cardinality $n$ accessed by 
the independence oracle, any quantum algorithm 
that determines  whether $M$ is connected with bounded error $1/3$ 
needs at least $\Omega(n)$ queries to the independence oracle.
\end{theorem}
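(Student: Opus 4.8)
The plan is to apply Ambainis's adversary method (Lemma~\ref{le:ambainis}). First I would fix an encoding of matroids: a matroid on the $n$-element ground set $E$ is represented by the binary string of length $2^{n}$ whose $S$-th bit equals $O_{i}(S)$, so that \textbf{CMP} becomes a partial Boolean function $f$ on such strings. Set $r=\lfloor n/2\rfloor$, let $M$ be the minimal matroid of rank $r$ on $E$ from Lemma~\ref{le:minimal-matroid}, write $\mathcal{B}=\mathcal{B}(M)$ and $N:=|\mathcal{B}|=r(n-r)+1$, and for each $B\in\mathcal{B}$ let $M_{B}$ be the disconnected matroid of Lemma~\ref{le:nearest-to-minimal-matroid} whose set of bases is $\mathcal{B}-B$. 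I would take $X=\{M\}$, $Y=\{M_{B}:B\in\mathcal{B}\}$, and $R=X\times Y$. Since $M$ is connected and every $M_{B}$ is disconnected, $f(x)\neq f(y)$ for all $x\in X$ and $y\in Y$, as Lemma~\ref{le:ambainis} requires.

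The crucial structural step, which I expect to be the main work, is to show that the encodings of $M$ and of each $M_{B}$ differ in \emph{exactly one} coordinate, namely that $\mathcal{I}(M_{B})=\mathcal{I}(M)\setminus\{B\}$. For this I would use the explicit description of the independent sets of the minimal matroid obtained from Lemma~\ref{le:minimal-matroid}: a set $A\subseteq E$ is independent in $M$ iff either $A\subseteq E_{0}$, or $|A\cap\overline{E_{0}}|=1$ and $|A\cap E_{0}|\le r-1$. From this one checks that every independent set of $M$ of size at most $r-1$ is contained in at least two distinct bases of $M$ (for instance, if $A\subseteq E_{0}$ then $A\subseteq E_{0}$ and $A\subseteq E_{0}-e_{i}+e_{j}$ for any $e_{i}\in E_{0}\setminus A$ and $e_{j}\in\overline{E_{0}}$), so such an $A$ remains independent after deleting a single base; that $B$ itself, being a base of size $r$, is no longer contained in any base of $M_{B}$ and hence becomes dependent; and that every set dependent in $M$ stays dependent. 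Consequently the strings encoding $M$ and $M_{B}$ agree everywhere except at coordinate $S=B$.

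With this in hand the four parameters of Lemma~\ref{le:ambainis} are immediate. For the single $x=M$ there are $m=N$ partners $y=M_{B}$, and for each $y=M_{B}$ there is $m'=1$ partner $x=M$. For a fixed coordinate $S$, the number of $B\in\mathcal{B}$ for which $M$ and $M_{B}$ differ at $S$ is the number of bases equal to $S$, which is at most $1$, so $l=1$; symmetrically $l'=1$ because $|X|=1$. Hence any quantum algorithm computing \textbf{CMP} with bounded error uses at least $\Omega\!\bigl(\sqrt{mm'/(ll')}\bigr)=\Omega(\sqrt{N})$ queries, and with $r=\lfloor n/2\rfloor$ we have $N=\Theta(n^{2})$, yielding the claimed $\Omega(n)$ lower bound.

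One secondary point I would address explicitly is that taking $X$ to be a singleton is legitimate: Lemma~\ref{le:ambainis} only asks for the four counting conditions, all of which hold when $|X|=1$, the asymmetry merely forcing $m'=l'=1$, which still gives $\sqrt{N/l}=\sqrt{N}$. One could instead symmetrize by letting $X$ and $Y$ range over all relabelings of $M$ and of the matroids $M_{B}$, but this is unnecessary since the singleton version already attains the target bound.
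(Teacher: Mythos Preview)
Your proposal is correct and follows essentially the same approach as the paper: the same encoding, the same singleton $X=\{M\}$ with $M$ the rank-$\lfloor n/2\rfloor$ minimal matroid, the same $Y=\{M_{B}:B\in\mathcal{B}\}$, and Ambainis's adversary method. Your write-up is in fact more careful than the paper's, which jumps directly to $\Omega(\sqrt{|X||Y|})$ without explicitly verifying $l=l'=1$; your argument that $\mathcal{I}(M_{B})=\mathcal{I}(M)\setminus\{B\}$ (i.e., that every proper independent subset of a base lies in a second base) is exactly the missing justification for $l=1$, and it is correct as stated.
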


\begin{proof}
Let $E$ be a finite ground set with cardinality $n$, $\mathcal{M}$ the set 
of matroids on $E$. We define a mapping 
\begin{equation}
 \chi:\mathcal{M}\rightarrow\{0,1\}^{2^n}, 
\end{equation}
for every matroid $M\in\mathcal{M}$. Let $\chi(M)$ be the unique string 
encoding $M$, where different bits represent
different subsets of $E$ and the appearance of `1' indicates that the corresponding subset is an independent set of $M$.

Define a Boolean function
\begin{equation}
 f:\mathcal{D}\rightarrow\{0,1\}   
\end{equation}
that computes the connectedness of matroids on $E$,  where $\mathcal{D}\subseteq\{0,1\}^{2^n}$ 
contains the valid strings encoding the matroids on $E$. For any $x\in\mathcal{D}$, $f(x)=1$
indicates that the matroid encoded by $x$ is connected, otherwise $f(x)=0$.
Let $M_{0}$ be a minimal matroid of rank $n/2$ on $E$, and $\mathcal{B}$ the 
set of bases of $M_{0}$. Let 
$X=\{\chi(M_{0}\}$, $Y=\{\chi(M_{B}):B\in\mathcal{B}\}$, where $M_{B}$ is a disconnected
matroid defined in Lemma~\ref{le:nearest-to-minimal-matroid}. By Lemma~\ref{le:ambainis}, we can see that any quantum algorithm computing $f$ with bounded 
error 1/3 takes at least $\Omega(\sqrt{|X||Y|})=\Omega(n)$ queries.
\end{proof}

\section{Conclusion}
In this paper, we consider the query complexity and quantum algorithm for determining
the connectedness of matroids. We show that the randomized query complexity is 
$\Omega(n^2)$ and thus the deterministic algorithm proposed by Cunningham is 
optimal. We also give a quantum algorithm taking $O(n^{3/2})$ queries 
and prove that the quantum query complexity is $\Omega(n)$, which indicates that
quantum algorithms can achieve at most a quadratic speedup over classical ones.
Therefore, we have a relatively comprehensive understanding of the potential of quantum
computing in determining the connectedness of matroids.
Obviously, there is still a gap between the lower and upper bounds of the quantum query complexity.
It is quite possible to believe that this quantum lower bound is not the optimal one. 
Finding a tight lower bound for the quantum query complexity of this problem can be the future work of our research.

\bibliographystyle{acm}
\bibliography{reference}
\end{document}